\newcommand\blfootnote[1]{%
  \begingroup
  \renewcommand\thefootnote{}\footnote{#1}%
  \addtocounter{footnote}{-1}%
  \endgroup
}
\newtheorem{Theorem}{Theorem}
\newtheorem{Proposition}{Proposition}
\newtheorem{Definition}{Definition}
\newtheorem{Construction}{Construction}
\newtheorem{Example}{Example}
\newcommand{\citep}[1]{\cite{#1}}
\renewcommand{\tilde}{\widetilde}
\renewcommand{\hat}{\widehat}
\def\beq{\begin{equation}}
\def\eeq{\end{equation}}
\def\beqa{\begin{eqnarray}}
\def\eeqa{\end{eqnarray}}
\def\beqan{\begin{eqnarray*}}
\def\eeqan{\end{eqnarray*}}
\DeclareMathOperator*{\argmin}{arg\,min}
\DeclareMathOperator*{\argmax}{arg\,max}
\newcounter{newenumi}
\def\Lset{{\cal L}}
\def\Uset{{\cal U}}
\def\Iset{{\cal I}}
\def\Sset{{\cal S}}
\def\Pset{{\cal P}}
\def\MCLset{{\cal MCL}}
\def\Pset{{\mathcal P}}
\newcommand{\Beam}{\text{\sf Beam}}
\newcommand{\str}{^{*}}
\begin{document}

\title{Optimal Single-User Interactive Beam Alignment with Feedback Delay}
\author{
\IEEEauthorblockN{
Abbas Khalili, 
Shahram Shahsavari, 
Mohammad A. (Amir) Khojastepour, 
and Elza Erkip
} 
}

\maketitle
\begin{abstract}

Communication in Millimeter wave (mmWave) band relies on narrow beams due to directionality, high path loss, and shadowing. One can use beam alignment (BA) techniques to find and adjust the direction of these narrow beams. In this paper, BA at the base station (BS) is considered, where the BS sends a set of BA packets to scan different angular regions while the user listens to the channel and sends feedback to the BS for each received packet. It is assumed that the packets and feedback received at the user and BS, respectively, can be correctly decoded. Motivated by practical constraints such as propagation delay, a feedback delay for each BA packet is considered. At the end of the BA, the BS allocates a narrow beam to the user including its angle of departure for data transmission and the objective is to maximize the resulting expected beamforming gain. A general framework for studying this problem is proposed based on which a lower bound on the optimal performance as well as an optimality achieving scheme are obtained. Simulation results reveal significant performance improvements over the state-of-the-art BA methods in the presence of feedback delay.

\blfootnote{This work is supported by
National Science Foundation grants 1547332, 
1824434, and NYU WIRELESS Industrial Affiliates. A part of this work has been presented in ISIT 2021 \cite{fullVersion}.}
\end{abstract}

\begin{IEEEkeywords}
Millimeter wave, Analog beam alignment, Interactive beam alignment, Non-interactive beam alignment, Contiguous beams. 
\end{IEEEkeywords}

\section{Introduction}
Millimeter wave (mmWave) communications have the potential to significantly improve the data rate and latency of wireless networks \cite{mmWave-survey-nyu}. However, signals transmitted in mmWave frequency bands (between $30$ GHz and $300$ Hz) suffer from high path-loss and intense shadowing \cite{rappaport2011state}. As a solution, beamforming has been proposed to improve the signal strength by employing directional beams (i.e., beams with a narrow beam width) to concentrate the transmit power towards the directions of interest \cite{kutty2016beamforming}. 

Several experimental studies such as \cite{akdeniz2014millimeter} have shown that the mmWave channels only incorporate a few spatial clusters. Consequently, a procedure called beam alignment (BA) (a.k.a. beam training and beam search) is required to find narrow beams aligned with the angle of departure (AoD) of channel spatial clusters at the transmitter \cite{giordani2018tutorial}. In BA, the wireless transmitter uses beamforming to scan the space and match its transmission beam to the channel clusters. In general, BA can be used for initial access when the users try to connect to the base station (BS) and access network resources \cite{barati2016initial,giordani2016comparative} or for beam tracking where the beam directions is updated due to mobility or change in the propagation environment \cite{Scalabrin2018,Shah1906:Robust}. To reduce power consumption, it is suggested to use analog beamforming when performing BA in which the antenna array is connected to a single RF chain and hence a single direction at any given time is scanned. 

In general, BA schemes can be classified into two main categories, namely  \textit{interactive} and \textit{non-interactive} BA. To elaborate, let us consider the BA procedure in a downlink scenario at the BS whose objective is to localize the AoD of a user's channel. During BA, the BS sends a set of probing packets through a set of scanning beams in order to scan its angular space and after receiving user's feedback to all of the packets, decides on a narrow beam including the user AoD. In non-interactive BA, the BS uses a set of predetermined scanning beams which are independent of the feedback received during BA. In interactive BA, however, the BS uses the received feedback during BA to refine the scanning beams and better localize the user AoD compared to non-interactive BA.

We studied optimal non-interactive BA in \cite{khalili2020optimal}, where we investigated BA through an information theoretic perspective and provided bounds on the minimum expected data beamwidth along with optimality achievablity BA schemes. In this paper, we focus on interactive BA, which is more challenging due to the presence of the receiver's feedback during the BA. There is a large literature on interactive BA methods \cite{hussain2017throughput,michelusi2018optimal,chiu2019active,khalili2021single,yildiz2021hybrid,desai2014initial,khosravi2019efficient,shabara2018linear,klautau20185g,Song2019,7947209}. 
More specifically, \cite{khalili2021single,hussain2017throughput,michelusi2018optimal,chiu2019active} consider the problem of interactive analog BA for the single-path channel. References \cite{hussain2017throughput,michelusi2018optimal}, show that bisection search is optimal under the objectives of throughput maximization and average rate maximization subject to a average power consumption, respectively. In \cite{chiu2019active}, the authors provide a  active learning based BA strategy for finding a data beam with a target resolution. In \cite{khalili2021single}, we provide a deep learning based BA formulation using recurrent neural networks to minimize the expected user's data beamwidth. The impact of having multiple RF-chains is investigated in \cite{yildiz2021hybrid}, where we consider interactive hybrid BA by providing a group testing framework and develop novel group testing based BA strategies.

Prior works on interactive BA assume that the receiver's feedback on a probing packet is instantaneous and available before the transmission of next one. However, the impact of feedback delay has not been considered and analyzed, or robustness of the proposed solutions to feedback delay is not studied. Consequently, an important question is how the transmitter should proceed with the probing before receiving the user feedback, i.e., how to use the time spent on the delay efficiently for BA so as to reduce the BA overhead which may be critical for latency-sensitive applications.

In this paper, we investigate the problem of single-user interactive BA in the presence of feedback delay for the probing packets. During BA, the BS transmits $b$ BA packets and each packet has a feedback delay of $d$ time-slots. We assume a single cluster channel (either LoS or NLoS) model for the user and consider analog beamforming at the BS while the user is assumed to have an omnidirectional reception pattern. We also assume that the probing packets as well as their feedback can be decoded correctly at the user and the BS, respectively (error free decoding). Additionally, to avoid the problem of having arbitrary fragmented beams which may not be feasible in practice, we assume that the BS is constrained to use contiguous beams\footnote{Contiguous beams are the beams with a single main lobe covering one contiguous angular interval.}. Practical implementation of contiguous beams are less complex compared to non-contiguous (fragmented) beams with multiple angular coverage regions. From the BS perspective, we define the angular region containing the AoD of the user's channel as the uncertainty region (UR). After BA and during data communication, the BS's beam covers the entire UR to ensure reliable communication. Our objective is to maximize the expected beamforming gain during data communication which is inversely proportional with the width of the UR.
A summary of our contributions are as follows:

\begin{itemize}
    
    \item We view the BA problem with feedback delay as a source coding problem in which the BS asks $b$ yes/no questions and there is a constant delay of $d$ time-slots between each question and its answer. A question corresponds to asking if the AoD belongs to a certain angular interval and is implemented at the BS by sending a probing packet that covers the particular angular interval. Using this analogy, we show that the BA problem is equivalent to finding cyclically ordered sets (i.e., loops) \cite{novak1982cyclically} of angular intervals and feedback sequences referred to by \emph{loop of component beams} and \emph{$(b,d)$-unimodal loop}, respectively (Section~\ref{sec:BA}). 

    \item We investigate the properties of $(b,d)$-unimodal loops and provide a construction for $(b,d)$-unimodal loops that allow us achieve optimal performance in terms of the expected beamforming gain. Moreover, we show that the use of optimal contiguous scanning beams leads to contiguous URs at the end of BA when $d\geq 2$. This suggests optimal data transmission beams are also contiguous (Section~\ref{sec:unimodal}). 
    
    \item We provide a procedure for obtaining the optimal loop of component beams along with optimality achieving BA strategy for any arbitrary prior probability distribution on the AoD.  We also derive a tight lower bound on the maximum expected beamforming gain after BA~(Section~\ref{sec:angular_Int}).

    \item Through numerical evaluations and simulations, we investigate the impacts of feedback delay and AoD prior distribution on the optimal expected beamforming gain as well as the corresponding data beams. Furthermore, we compare the performance of the proposed optimal method with state-of-the-art methods in terms of the number of required time-slots to achieve a certain expected beamforming gain for the data beam. The results reveal that our method significantly improves the performance with respect to the state-of-the-art interactive and non-interactive BA (Section~\ref{sec:numerical}).
\end{itemize}

\section{System Model and Preliminaries} 
\label{sec:sys}
In this section, we outline general system assumptions (Sections \ref{sec:sys1} and \ref{sec:sys2}) and then provide the mathematical formulation of the problem (Sections \ref{subsec:prb} and \ref{sec:sys4}).

\subsection{Network Model}
\label{sec:sys1}
We consider a single-user downlink communication scenario in a single-cell mmWave system. Motivated by previous works (e.g., \cite{michelusi2018optimal,chiu2019active,khalili2021single}) and experimental studies (e.g., \cite{akdeniz2014millimeter}), we assume that the mmWave propagation channel has only a single dominant cluster. We denote the AoD corresponding with this cluster by $\psi$ which is unknown to the BS. Motivated by several prior works (e.g.,  \cite{khalili2021single,chiu2019active,hussain2017throughput,michelusi2018optimal}), we consider an analog BF architecture at the BS enabling it to use one beam at a time while the user is assumed to have an omnidirectional reception pattern. The goal of BA is to find the shortest possible angular interval, called uncertainty region (UR), containing $\psi$. This, in turn, results in the largest beamfomring gain. After BA, the BS will use a beam covering the entire UR for data transmission. We assume $\Psi \sim f_{\Psi} (\psi)$ for $\psi \in(0,2\pi]$ which accounts for the prior knowledge on the AoD (e.g., corresponding to the history of previously localized AoD in beam tracking applications). The BA algorithm can exploit this knowledge to reduce the width of UR and increase the beamforming gain during data transmission.

Due to practical constraints, we only consider use of contiguous beams as in \cite{khalili2020optimal}. Similar to \cite{khalili2021single,chiu2019active,hussain2017throughput,michelusi2018optimal,bai2015coverage}, we assume that the beams are ideal and use the \textit{sectored antenna} model introduced in \cite{ramanathan2001performance}. In this model, each beam is characterized by an angular interval $\Phi$ representing the angular coverage region (ACR) of the beam as well as a constant beamforming gain equal to $2\pi/|\Phi|$ representing the directivity gain of the beam\footnote{In this paper, we neglect the impact of side-lobes and leave it for future studies.}. In the case of contiguous beams, ACR $\Phi$ is a contiguous interval within $(0,2\pi]$. This model is justified as the BSs are envisioned to use large antenna arrays allowing for close to ideal beam shapes \cite{mmWave-survey-nyu}. 

\begin{figure}[t]
\centering
\includegraphics[width=0.4\linewidth, draft=false]{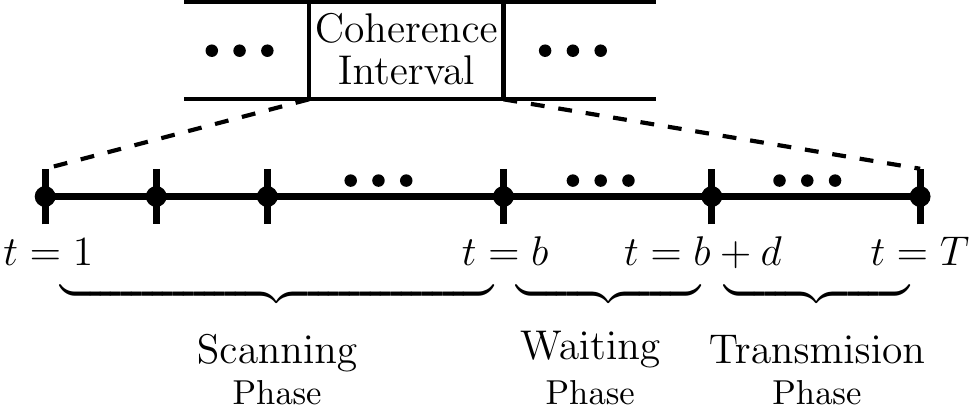}
\caption{Time-slotted system.}
\label{fig:sys_model}
\end{figure}

\subsection{Frames and Feedback}
\label{sec:sys2}

We consider an interactive BA scenario in which the BS receives feedback form the user during the transmission of BA probing packets and can change the subsequent scanning beams based on the feedback. Unlike conventional interactive BA in which the feedback to each transmitted packet is assumed to be available at the BS instantly \cite{michelusi2018optimal,khalili2021single}, we consider a fixed known delay of $d$ time-slots for each user feedback. This delay accounts for practical constraints such as processing and decoding delays at the transceivers or constraints imposed by the standard (e.g., 3GPP) where there are certain timings for various procedures. If an accurate value of this delay is not available, an upper bound can be utilized for our analysis. We assume that the feedback to each probing packet is either an acknowledgement (ACK) implying that the packet was received by the user or a negative acknowledgement (NACK) which indicates the user did not receive the packet. In our setup, NACK corresponds to the absence of ACK. Similar to \cite{michelusi2018optimal}, we consider that the feedback is received through an  error free dedicated control channel \cite{mmWave-survey-nyu}. Also, as in  \cite{hussain2017throughput,khalili2020optimal}, we assume that each BA packet is detected at the user without error if the beam used for BA packet transmission includes $\psi$.

Motivated by the above discussion, we consider a time-slotted system in which a fixed AoD is associated with user's channel over a coherence interval of duration $T$ time-slots. We assume that the coherence interval comprises three phases as shown in Fig.~\ref{fig:sys_model}. The first is \textit{scanning phase} in which the BS transmits $b$ BA probing packets through a set of scanning beams to scan the surrounding angular space. 
Since the response to each packet takes $d$ time-slots, after the scanning phase there is a \textit{waiting phase} in which the BS waits to receive the feedback to all of the probing packets. This phase lasts for $d$ time-slots and can be used for example, for data transmission to other users for which the BS has already performed BA. Subsequently, the BS processes all the feedback received and infers a beam for data communication. The rest of the coherence interval, i.e., the last $T-b-d$ time-slots, is called \textit{transmission phase} which is used for data transmission using the beam inferred from the previous phases.

Our main goal is \textit{i)} to design a dictionary of beams to be used in the scanning phase given the prior distribution of user's AoD, \textit{ii)} to provide an approach for using the designed beams in the scanning phase, and \textit{iii)} to propose a method for processing the user's feedback and construct a beam for data transmission. The main objective in all the steps is to maximize the expected beamforming gain during data transmission.  

\subsection{Scanning Beam Set and Data Beam}
\label{subsec:prb}

In the scanning phase, the BS uses $b$ scanning beams $\{\Phi_i\}_{i\in[b]}$ to transmit $b$ BA probing packets\footnote{We use the notation $[n]$ to represent the set $\{1,2,\ldots,n\}$.}. Let $a_i \in \{0,1\}$ denote the feedback received for the $i^{\rm th}$ BA packet, where $a_i = 1$ if ACK was received for $\Phi_i$ and $a_i = 0$, otherwise. In general, the scanning beam $\Phi_i$ is determined based on the received feedback sequence until time-slot $i$ i.e., $(a_1,a_2, \ldots, a_{i-d})$. To model this, we adopt a hierarchical beam set $\Sset(b,d) = \{\Sset_i\}_{i \in [b]}$, where $\Sset_i = \{S_{i,m}\}_{m\in[M(i-d,d)]}$ denotes the set of designed scanning beams for time-slot $i$. Note that there are a total of $M(i-d,d)\leq 2^{i-d}$ possible feedback sequences until time-slot $i$. To elaborate, the set $\Sset_i$ contains a beam for each possible feedback sequence received by the $i^{\rm th}$ time-slot, the BS selects the beam $S_{i,m}\in \Sset_i$ for the transmission of the next probing packet based on the realization of the received feedback sequence. To simplify the notation, unless necessary, we refer to $\Sset(b,d)$ by $\Sset$.

Next, we describe how to process the received user feedback and infer the shortest angular region including the user's AoD (i.e., UR). Given an AoD realization $\psi$, we denote the angular region that the BS chooses for data transmission (i.e., UR) by $\Beam(\Sset,\psi)$. Under the assumption of having a single dominant path channel and error free system, the minimum width UR is 
\begin{align}
\label{eq:phis}
\Beam(\Sset,\psi) = \cap_{i = 1}^b \Theta(\Phi_i,a_i),
\end{align}
where $\Theta(\Phi_i,a_i) = \Phi_i$ if $a_i = 1$ which corresponds to $\psi \in \Phi_i$, and $\Theta(\Phi_i, a_i) = (0,2\pi] - \Phi_i$, otherwise.
Based on \eqref{eq:phis}, given $\Sset$, we get an UR for each possible feedback sequence $(a_1, a_2, \ldots, a_b)$. We denote the set of possible URs by $\Uset = \{U_m\}_{m \in M(b,d)}$. Note that the number of URs at the end of the BA is the same as the number of possible feedback sequences received by time-slot $b+d$, that is $M(b,d)$. Note that the URs $U_m$s are disjoint. For reliable transmission, the BS forms a beam to cover the entire UR in the data transmission phase which means that $\Beam(\Sset,\psi) = U_m$ if $\psi \in U_m$. To further clarify the above discussions, let's consider the following example.

\begin{figure*}[t!]
    \centering
    \includegraphics[width = 0.7\textwidth]{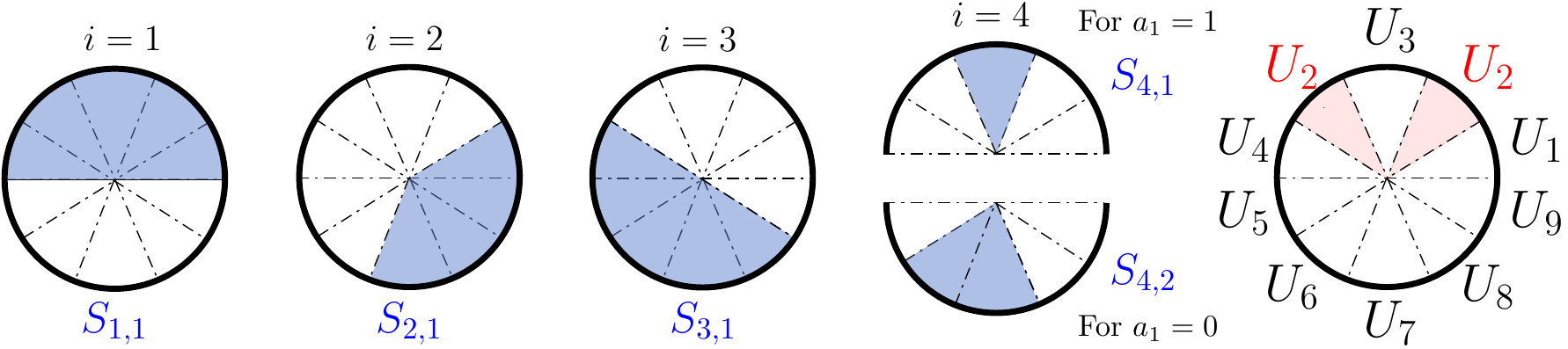}
     \caption{A set of scanning beams and their corresponding uncertainty regions for $b=4$ and $d=3$.}
    \label{fig:SCLI}
\end{figure*}

\begin{Example}
\label{ex:Iset}
Fig.~\ref{fig:SCLI} illustrates a possible set of scanning beams for $b=4$ and $d=3$. In this case, $\Sset(4,3) = \{\Sset_1,\Sset_2,\Sset_3,\Sset_4\}$, with $\Sset_i = 
\{\Phi_i\}$ for $i \in \{1,2,3\}$ each consisting of a single possible scanning beam as no feedback is received prior to fourth time-slot. However, at the fourth time-slot, we receive the feedback to the beam $\Phi_1$ and so there are two possibilities for $\Phi_4$. Here,  we have $\Sset_4 = \{S_{4,1}, S_{4,2}\}$. As shown in Fig.~\ref{fig:SCLI}, the set $\Sset(4,3)$ creates the URs $\Uset = \{U_j\}_{j=1}^9$.
\end{Example}
One important observation from Ex. \ref{ex:Iset} is that even though the scanning beams are contiguous, the URs might be fragmented which is the case for $U_2$. However, for our optimality achieving BA procedure, which will be discussed later in the paper, we show that the resulting URs are contiguous intervals when using contiguous scanning beams. This means that the data beam will also be contiguous.

\subsection{Problem Formulation}
\label{sec:sys4} 
The objective is to design the set of scanning beams $\Sset$ so as to maximize the expected beamforming gain during data transmission. Based on the discussions in Sec.~\ref{sec:sys1}, this is formulated as follows.
\begin{align}
\label{eq:optimization} 
\begin{aligned}
\Sset^* = \argmax_{\Sset} \mathbb{E}_{\Psi}\left[\frac{2\pi}{|\Beam(\Sset,\Psi)|}\right],
\end{aligned}
\end{align}
where the expectation is taken over the AoD prior $f_{\Psi}(\psi)$. Using the definition of URs, we can rewrite the expected beamforming gain as 
\begin{align}
\label{eq:optimization2}
\mathbb{E}_{\Psi}\left[\frac{2\pi}{|\Beam(\Sset,\Psi)|}\right] = \sum_{m=1}^{M(b,d)} \frac{2\pi}{|
U_m|}\int_{\psi\in U_m}\!\!\!\!f_{\Psi}(\psi) d\psi,
\end{align}
where notation $|U_m|$ is the Lebesgue measure of $U_m$, which is equal to the total width of the intervals in the case where $U_m$ is the union of a finite number of intervals. 

As it can be observed form \eqref{eq:optimization2}, the performance of any scanning beam set $\Sset$ depends solely on the set of possible URs it may generate. Therefore, we can characterize the performance of the optimal scanning beam set by investigating the properties of the corresponding URs. 

\section{Beam Alignment and Cyclic Sets}
\label{sec:BA}
To find the optimal scanning beam set $\Sset^*$, we use an information theoretic approach in which we view the discussed BA problem as a source coding problem. The BS asks $b$ questions whose answers (the feedback sequences) represent the source codewords describing the user's data beam. Unlike a finite alphabet source coding problem, here, the questions are intervals inside $(0,2\pi]$. Using this framework, we reduce the optimization in \eqref{eq:optimization2} into finding a pair of cyclically ordered sets (i.e., loops) \cite{novak1982cyclically} of angular intervals and feedback sequences as discussed below. 

\subsection{Preliminaries and Definitions}
\label{subsec:prelem}
Considering the example shown in Fig.~\ref{fig:SCLI}, we observe that the angular intervals in between the dotted-dashed lines can be used as basis for representation of the scanning beams. We refer to such angular intervals as \textit{component beams} formally defined below.
\begin{Definition}[\textbf{Component Beams}]
\label{def:cb}
Given a set of scanning beams $\Sset$, we sort the endpoints of the scanning beams and remove the repetitions. We define each angular interval in between two consecutive endpoints as a component beam.
\end{Definition}
Note that based on Def.~\ref{def:cb}, the component beams are contiguous and partition interval $(0,2\pi]$. As a result, we can use their position on the circle to define a cyclic order and form a loop of component beams. We denote this loop by $\Iset$. For the setup in Ex.~\ref{ex:Iset} these component beams and their loop are\footnote{We use the notation $\odot[\ldots]$ to indicate loops.}
\begin{align}
\begin{aligned}
\tikzset{every picture/.style={line width=0.75pt}} 
\begin{tikzpicture}[x=0.75pt,y=0.75pt,yscale=-1,xscale=1]
\path (400,50); 
\draw   (146.4,110.9) .. controls (146.4,92.4) and (161.4,77.4) .. (179.9,77.4) .. controls (198.4,77.4) and (213.4,92.4) .. (213.4,110.9) .. controls (213.4,129.4) and (198.4,144.4) .. (179.9,144.4) .. controls (161.4,144.4) and (146.4,129.4) .. (146.4,110.9) -- cycle ;
\draw [dash dot] [line width=0.5]    (153.28,91.56) -- (186.71,115.85) -- (206.52,130.24) ;
\draw [dash dot] [line width=0.5]     (169.73,79.61) -- (190.07,142.19) ;
\draw  [dash dot] [line width=0.5]   (190.07,79.61) -- (169.73,142.19) ;
\draw  [dash dot] [line width=0.5]   (147,110.9) -- (212.8,110.9) ;
\draw  [dash dot] [line width=0.5]  (153.28,130.24) -- (206.52,91.56) ;
\draw (22,142.4) node [anchor=north west][inner sep=0.75pt]    {$\ $};
\draw (211.6,89.4) node [anchor=north west][inner sep=0.75pt]    {$I_{1}$};
\draw (198.6,68.4) node [anchor=north west][inner sep=0.75pt]    {$I_{2}$};
\draw (173.6,59.4) node [anchor=north west][inner sep=0.75pt]    {$I_{3}$};
\draw (146.6,68.4) node [anchor=north west][inner sep=0.75pt]    {$I_{4}$};
\draw (132.6,89.4) node [anchor=north west][inner sep=0.75pt]    {$I_{5}$};
\draw (132.6,113.4) node [anchor=north west][inner sep=0.75pt]    {$I_{6}$};
\draw (147.6,135.4) node [anchor=north west][inner sep=0.75pt]    {$I_{7}$};
\draw (172.73,144.59) node [anchor=north west][inner sep=0.75pt]    {$I_{8}$};
\draw (196.6,136.4) node [anchor=north west][inner sep=0.75pt]    {$I_{9}$};
\draw (209.6,117.4) node [anchor=north west][inner sep=0.75pt]    {$I_{10}$};
\draw (250,100) node [anchor=north west][inner sep=0.75pt]    {$\ \ \ \ \mathcal{I} \ =\odot \left[
I_1,I_2, I_3, I_4, I_5, I_6, I_7, I_8, I_9, I_{10}\right]$};
\end{tikzpicture}
\qquad \qquad 
\end{aligned}
\label{eq:Iset_ex}
\end{align}

We can use $\Iset$ to provide a binary loop representation of each of the scanning beams. For a given scanning beam, we construct this binary loop by putting one at the position of the component beams which represent (i.e., partition) the scanning beam and zero elsewhere. For example, the binary loop representing the beam $S_{2,1}$ in Ex.~\ref{ex:Iset} is $\odot[1~0~0~0~0~0~0~1~1~1]$.
As we can see, the positions of ones in this loop are consecutive (note that the first element is located right after the last element since it is a loop). This property holds for all of the scanning beams since they are contiguous.
We refer to such binary loops as \textit{unimodal loops} formally defined below. 

\begin{Definition}[\textbf{Unimodal Loop}]
\label{def:ul}
A unimodal loop is a binary loop in which the positions of ones (if any) are consecutive.
\end{Definition}

Next, we use the loop of component beams and the unimodality property to study some properties of feedback sequences.

\subsection{Scanning Beams Set Decomposition}

Let us create a loop of the feedback sequences of a scanning beam set $\Sset$ by first forming its loop of component beams and then replacing each element of the loop with its corresponding feedback sequence, i.e., the feedback sequence that the BS would receive if the AoD of the user falls inside that component beam. We denote this loop by $\Lset(b,d)$, where $b$ is the BA duration and $d$ is the delay. As an example, for the setup in Ex.~\ref{ex:Iset} whose loop of component beams is shown in \eqref{eq:Iset_ex}, this loop is
\begin{align}
    \mathcal{L}(4,3) \ =\odot \left[\def\arraystretch{0.5}\begin{array}{ c c c c c c c c c c }
1 & 1 & 1 & 1 & 1 & 0 & 0 & 0 & 0 & 0\\
1 & 0 & 0 & 0 & 0 & 0 & 0 & 1 & 1 & 1\\
0 & 0 & 0 & 0 & 1 & 1 & 1 & 1 & 1 & 0\\
0 & 0 & 1 & 0 & 0 & 0 & 1 & 1 & 0 & 0
\end{array}\right].
\label{eq:IandL}
\end{align}
To simplify the notation, unless necessary, we refer to $\Lset(b,d)$ by $\Lset$. Before, we provide some general properties of $\Lset$, let us first consider the following example.  
\begin{Example}
\label{ex:du}
Consider the setup in Ex.~\ref{ex:Iset} and its loop of feedback sequences, $\Lset(4,3)$, shown in \eqref{eq:IandL}. The first row (i.e., the first feedback bit $a_1$) corresponds to the first scanning beam $\Phi_1$. More precisely, the first bit of a column is one if its component beam is included in $\Phi_1$ and zero, otherwise. As a result, the first row of the loop $\Lset(4,3)$ is the binary loop representation of $\Phi_1 = S_{1,1}$ in terms of the component beams, which, as discussed in Sec.~\ref{subsec:prelem}, is unimodal. Similarly, we can show that the second and third rows are also unimodal loops since they are binary representations of the beams $S_{2,1}$ and $S_{3,1}$, respectively. However, the fourth row is not unimodal. The reason is that this row corresponds to two beams, namely $S_{4,1}$ and $S_{4,2}$ depending on the feedback $a_1$ which is received at the fourth time-slot. If $a_1 = 1$, the BS uses the beam $S_{4,1}$, otherwise it uses the beam $S_{4,2}$. Therefore, the forth bit of each column is one if its component beam falls inside $\phi_1 \cap S_{4,1}$ or $((0, 2\pi] - \phi_1) \cap S_{4,2}$ and zero, otherwise. Nevertheless, if we consider the columns of the loop $\Lset(4,3)$ corresponding to $a_1 = 1$ and $a_1 = 0$ separately.
\begin{align}
\label{eq:ex01}
    \tilde{\Lset}_1(4,3) \ =\odot \left[\def\arraystretch{0.5}\begin{array}{ c c c c c c c c c c }
1 & 1 & 1 & 1 & 1 \\
1 & 0 & 0 & 0 & 0 \\
0 & 0 & 0 & 0 & 1 \\
0 & 0 & 1 & 0 & 0
\end{array}\right], \quad
   \tilde{\Lset}_2(4,3) \ =\odot \left[\def\arraystretch{0.5}\begin{array}{ c c c c c c c c c c }
0 & 0 & 0 & 0 & 0\\
0 & 0 & 1 & 1 & 1\\
1 & 1 & 1 & 1 & 0\\
0 & 1 & 1 & 0 & 0
\end{array}\right],
\end{align}
We observe that the resulting binary loop of the fourth row in the sub-loop $\tilde{\Lset}_1(4,3)$ ($\tilde{\Lset}_2(4,3)$), namely $\odot [0$ $0$ $1$ $0$ $0]$ ($\odot[0$ $1$ $1$ $0$ $0]$), is unimodal\footnote{A sub-loop of a loop is a loop in which some of the elements of the original loop are removed.}. 

Another observation is that the loop $\Lset(4,3)$ does not have any consecutive repetitions of columns (feedback sequences). This stems from the component beams definition in Def.~\ref{def:cb}.
\end{Example}

Generalizing the above example, we have the following theorem. 
\begin{Theorem} [\textbf{Scanning Beam Set Decomposition}]
\label{thm:Lprop}
A scanning beam set $\Sset$ can be decomposed into a loop of component beams $\Iset$ and a loop of feedback sequences $\Lset$, where column $j$ of $\Lset$ is the feedback sequence corresponding to column $j$ of $\Iset$. The loop $\Lset$ has the following properties.
\begin{enumerate}
    \item For $i\in [b]$ and for each prefix of length $i-d$, the loop consisting of the $i^{\rm th}$ bits of the feedback sequences with that prefix is unimodal. Note that for $i\leq d$, we assume all feedback sequences have the same prefix.
    \item It does not have any consecutive column repetitions.
\end{enumerate}
We refer to any loop of feedback sequences satisfying the above two properties as \emph{$(b,d)$-unimodal}.
\end{Theorem}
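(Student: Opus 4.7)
The plan is to construct the decomposition explicitly and then verify the two claimed properties separately. Given $\Sset$, I form $\Iset$ as in Def.~\ref{def:cb} by taking the sorted, deduplicated endpoints of all scanning beams in $\Sset$, and then define $\Lset$ column by column: for each component beam $I_j \in \Iset$, its column is the deterministic feedback sequence that the BS would receive were the AoD anywhere inside $I_j$. This sequence is well-defined because the adaptive scanning protocol selects a single scanning beam at each time-slot as a deterministic function of the prefix already received. Should two adjacent columns happen to coincide (which can occur when an endpoint of some $S_{i,m}$ falls outside the region where prefix $m$ is actually consistent, so that the endpoint does not influence any relevant bit), I merge the pair into a single component beam; this reduction does not alter any feedback sequence or any BA decision.

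For Property 1, fix $i \in [b]$ and a prefix $p$ of length $i-d$ (with the empty prefix when $i\le d$). Let $\Iset_p$ denote the sub-loop of $\Iset$ whose columns in $\Lset$ begin with $p$. Because the BS has received exactly $p$ by the start of time-slot $i$ for every AoD in $\Iset_p$, it transmits through the single contiguous beam $S_{i,p}$; thus, for $I_j \in \Iset_p$ the $i^{\rm th}$ bit equals $1$ iff $I_j \subseteq S_{i,p}$. Since $S_{i,p}$ is a single arc whose endpoints are among those defining $\Iset$, the component beams contained in $S_{i,p}$ occupy a cyclically contiguous sub-sequence of $\Iset$. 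Restricting a contiguous sub-sequence of a cyclic order to any sub-loop yields a contiguous sub-sequence in the inherited cyclic order of that sub-loop. Applying this to the sub-loop $\Iset_p$, the positions of $1$s among the $i^{\rm th}$ bits form a contiguous cyclic block, which is precisely unimodality.

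For Property 2, the merging step in the construction guarantees that no two adjacent columns of $\Lset$ coincide. Equivalently, after cleanup each boundary between consecutive component beams corresponds to a scanning-beam endpoint that is effectively used to distinguish the two adjacent sequences, forcing at least one bit to differ.

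The main obstacle is the cyclic-order step in Property 1: when $R_p$, the angular region with prefix $p$, consists of several disjoint arcs on the circle, the sub-loop $\Iset_p$ is itself not a contiguous sub-sequence of $\Iset$. I therefore need to state cleanly and invoke the general fact that removing elements from a cyclic order preserves the cyclic contiguity of any surviving interval, so that the intersection of the contiguous block determined by $S_{i,p}$ with $\Iset_p$ is still contiguous in $\Iset_p$. Once this cyclic-order lemma is established, contiguity of $S_{i,p}$ together with Def.~\ref{def:cb} closes the argument, and Property 2 is essentially bookkeeping.
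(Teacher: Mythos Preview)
Your argument is correct and, for Property~1, essentially identical to the paper's: both identify the $i^{\rm th}$ row restricted to a prefix sub-loop as a sub-loop of the binary representation of the single contiguous beam $S_{i,p}$, and invoke that a sub-loop of a unimodal loop is unimodal (your ``cyclic-order lemma'' is the paper's one-line ``sub-loop of a unimodal loop is also unimodal'').

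For Property~2 the routes diverge slightly. The paper argues by contradiction directly from Def.~\ref{def:cb}: if two adjacent component beams carry the same feedback sequence, then every scanning beam in $\Sset$ must contain both or neither, so no endpoint separates them and they would have been a single component beam. Your explicit merge-then-cleanup step reaches the same conclusion but is more careful: the paper's implication ``same feedback $\Rightarrow$ every scanning beam contains both or neither'' tacitly ignores the possibility that an endpoint of some $S_{i,m}$ lands in a region where the prefix $m$ is never realized (so the endpoint separates two component beams without affecting any feedback bit). Your merging handles this edge case cleanly, at the cost of a slight redefinition of $\Iset$ relative to Def.~\ref{def:cb}; the paper's version is terser but leans on an implicit assumption that scanning beams are not ``wastefully'' placed.
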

\begin{proof}
The proof is provided in Appendix~\ref{app:Lprop}.
\end{proof}

While the loop of feedback sequences, $\Lset$, cannot have consecutive column repetitions, non-consecutive repetitions are still possible. For example, in the loop $\Lset(4,3)$ shown in \eqref{eq:IandL} columns two and four are identical. On the other hand, the UR corresponding to a feedback sequence is the union of component beams in $\Iset$ that have that feedback sequence. Therefore, column repetition in $\Lset$, means there are non-contiguous URs. For example, for the loop $\Lset(4,3)$ in \eqref{eq:IandL}, the UR $U_2 = \cup\{I_2,I_4\}$ is non-contiguous. This can also be observed from Fig.~\ref{fig:SCLI}.

\subsection{Scanning Beam Set Construction}

So far, we have shown that a scanning beam set $\Sset$ can be decomposed into the pair $(\Iset,\Lset)$, where $\Lset$ is $(b,d)$-unimodal. Conversely, in this sub-section, we argue that given a pair $(\Iset,\Lset)$, where $\Iset$ and $\Lset$ have the same cardinality (number of columns) and $\Lset$ is $(b,d)$-unimodal, we can construct a scanning beam set that corresponds to $(\Iset,\Lset)$. Before providing the construction, let us consider the following example. 

\begin{Example}
\label{ex:SCLI}
Assume that the loop of component beams $\Iset$, and feedback sequences, $\Lset$, are given as in \eqref{eq:Iset_ex} and \eqref{eq:IandL}, respectively, and we want to find a scanning beams set $\Sset$ that leads to the pair $(\Iset,\Lset)$. Note that the loop $\Lset(4,3)$ is $(4,3)$-unimodal. Since $d=3$ and no feedback is received prior to the fourth time-slot, the set $\Sset_i$ for $i\leq3$, only consists of one scanning beam. This scanning beam is the union of the component beams in $\Iset$ whose corresponding bit in the $i^{\rm th}$ row of the loop $\Lset(4,3)$ is one. At the fourth time-slot, however, we receive the feedback to the first scanning beam (i.e., $a_1$). Since, $\Lset$ has feedback sequences with both $a_1 = 1$ and $a_1 = 0$, $\Sset_4$ will have two scanning beams, $S_{4,1}$ for $a_1= 1$ and $S_{4,2}$ for $a_1 = 0$.

Let us consider the beam $S_{4,1}$. The component beams that correspond to $a_1 = 1$ (component beams whose feedback sequences in the loop $\Lset(4,3)$ have $a_1 = 1$) are $I_1,~I_2,~I_3,~I_4$, and $I_5$ and let us consider the sub-loop of $\Lset$ corresponding to these component beams,
\begin{align}
\hat{\Lset}_1 =\odot\left[\def\arraystretch{0.5}\begin{array}{ c c c c c}
1 & 1 & 1 & 1 & 1 \\
1 & 0 & 0 & 0 & 0 \\
0 & 0 & 0 & 0 & 1 \\
0 & 0 & 1 & 0 & 0
\end{array}\right].
\end{align}

On the other hand, a component beam is included in $S_{4,1}$ only if its corresponding fourth feedback bit (fourth row of the loop $\Lset(4,3)$) is one. This suggests that the beam  $S_{4,1}$ includes the component beam $I_3$, but not $I_1~,~I_2,~I_4$, and $I_5$. Since the scanning beams are contiguous, we have $S_{4,1} = I_3$. Similarly, one can argue that the beam $S_{4,2}$ includes the component beams $I_7$ and $I_8$, but not  $I_6,~I_9$, and $I_{10}$ which leads to $S_{4,2} = I_7 \cup I_8$. This construction leads to the scanning beam set in Fig.~\ref{fig:SCLI}.

For the considered loop of the feedback sequences, the loop $\Lset(4,3)$, $S_{4,1}$ and $S_{4,2}$ are unique. However, this is not the case for any $(4,3)$-unimodal loop $\Lset(4,3)$. To elaborate, suppose that we had the following sub-loop for $S_{4,1}$ 
\begin{align}
\hat{\Lset}_1 =\odot\left[\def\arraystretch{0.5}\begin{array}{ c c c c c}
1 & 1 & 1 & 1 & 1 \\
1 & 0 & 0 & 0 & 0 \\
0 & 0 & 0 & 0 & 1 \\
0 & 0 & 0 & 1 & 1
\end{array}\right].
\end{align}
Then, $S_{4,1}$ should have included the component beams $I_4$ and $I_5$, but not $I_1,~I_2,$ and $~I_3$. There are multiple contiguous beams that satisfy these constraints. For example, the beams $\cup\{I_4,I_5\}$, $\cup\{I_4,I_5,I_6\}$, or  $\cup\{I_4,I_5,I_6, I_7\}$ are all valid choices.
\end{Example}

Generalizing this example we have the following result.

\begin{Theorem}[\textbf{Scanning Beam Set Construction}]
\label{thm:1t1}
Given a loop of component beams and a $(b,d)$-unimodal loop of the same cardinality $(\Iset,\Lset)$, one can construct a scanning beam set $\Sset= \{\Sset_i\}_{i\in [b]}$ that corresponds to the pair $(\Iset, \Lset)$. The construction is described below.

\begin{Construction}
\label{cons:Prob_beam_set}
For the set $\Sset_i$, where $i\in[b]$, we construct a beam for each unique prefix of length $i-d$ as it follows. For $i\leq d$, we assume all feedback sequences have the same prefix.
\begin{enumerate}
    \item Create sub-loops of $\Lset$ with the given prefix. 
    \item Create the sets $\Iset_{\rm in}$ and $\Iset_{\rm out}$ including the component beams of these columns which have bit $1$ and $0$ as their $i^{\rm th}$ bit feedback sequence ($i^{\rm th}$ row of $\Lset$), respectively.
    \item The scanning beam corresponding to this prefix is any union of component beams in $\Iset$ that has the following properties: i) It is  contiguous, ii) it does not include the component beams in $\Iset_{\rm out}$, and iii) it includes all of the component beams in $\Iset_{\rm in}$.
\end{enumerate}
\end{Construction}

\end{Theorem}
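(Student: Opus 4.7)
The plan is to verify that Construction~\ref{cons:Prob_beam_set} always admits at least one valid choice at step~3, and that for any such choice the resulting set $\Sset$ decomposes into the prescribed pair $(\Iset,\Lset)$ in the sense of Theorem~\ref{thm:Lprop}.

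First I would establish feasibility of step~3. Fix a time slot $i\in[b]$ and a prefix $p$ of length $\max(i-d,0)$, and let $\Lset^{(p)}$ denote the sub-loop of $\Lset$ consisting of columns whose first $\max(i-d,0)$ bits equal $p$. By property~1 of Theorem~\ref{thm:Lprop}, the $i$-th bits of $\Lset^{(p)}$ form a unimodal loop, so in the sub-loop's cyclic order the columns in $\Iset_{\rm in}$ (bit~$1$) form a single contiguous arc while the columns in $\Iset_{\rm out}$ (bit~$0$) form the complementary arc. Since the sub-loop inherits its cyclic order from $\Iset$, I would then take the arc in $\Iset$ that traverses exactly the ``in'' arc of $\Lset^{(p)}$, possibly passing through additional components of $\Iset$ that belong to other prefixes. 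By construction this arc is contiguous in $\Iset$, contains every element of $\Iset_{\rm in}$, and contains no element of $\Iset_{\rm out}$; the extra components it may sweep up are unconstrained for the current scanning beam, so this yields a valid choice for step~3.

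Next I would show that the constructed $\Sset$ reproduces $(\Iset,\Lset)$. For the feedback sequences, I would induct on $i$: if the AoD lies in component beam $I_j$, then by the induction hypothesis the prefix of length $i-d$ seen by the BS at slot $i$ coincides with the first $i-d$ entries of column~$j$, so the scanning beam associated with that prefix includes $I_j$ precisely when the $(i,j)$-entry of $\Lset$ is~$1$, and hence the $i$-th feedback bit agrees with that entry. For the loop of component beams, every scanning beam built by Construction~\ref{cons:Prob_beam_set} is a union of elements of $\Iset$, so the component beams of $\Sset$ are unions of consecutive $I_j$'s; property~2 of Theorem~\ref{thm:Lprop} then rules out any merging. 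Indeed, adjacent columns $j$ and $j+1$ of $\Lset$ must differ at some earliest row $k$, and because their first $k-1$ bits agree they share the same length-$\max(k-d,0)$ prefix at slot~$k$; hence the scanning beam assigned to that prefix at slot~$k$ separates $I_j$ from $I_{j+1}$, which forces that endpoint of $\Iset$ to be used. It follows that the component beams of $\Sset$ coincide with $\Iset$.

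The main obstacle is the feasibility step: one has to apply unimodality with care, since $\Iset_{\rm in}$ is contiguous inside the sub-loop $\Lset^{(p)}$ but generally \emph{not} contiguous inside the full $\Iset$, due to columns with other prefixes being interleaved. The saving observation is that such interleaved components are unconstrained for the scanning beam at slot~$i$, which is precisely what permits the ``sub-loop contiguous arc'' to be lifted to a ``full-loop contiguous arc.'' Once feasibility is in hand, the verification that $(\Iset,\Lset)$ is actually reproduced is a routine induction combined with the no-consecutive-repetition property.
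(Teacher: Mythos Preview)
Your proposal is correct and follows the same two-part verification as the paper (component-beam loop matches $\Iset$ via the no-consecutive-repetition property; feedback loop matches $\Lset$ via the in/out constraints of step~3), though you make the induction on $i$ explicit where the paper argues more directly. Your feasibility argument for step~3---lifting the unimodal ``in'' arc from the sub-loop to a contiguous arc in the full $\Iset$ by sweeping through any interleaved columns with other prefixes---is a welcome addition, as the paper's proof does not address whether a valid beam always exists.
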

\begin{proof}
The proof is provided in Appendix~\ref{app:1t1}, where we show that the resulting scanning beam from Const.~\ref{cons:Prob_beam_set} leads to the pair $(\Iset, \Lset)$.
\end{proof}

Using Theorems \ref{thm:Lprop} and \ref{thm:1t1}, we can establish the equivalence of finding the optimal scanning beam set $\Sset\str$ in \eqref{eq:optimization}, and finding the optimal pair of loops of component beams and loops of feedback sequences $(\Iset\str, \Lset\str)$. 
We characterize the properties of $(b,d)$-unimodal loops and construct an optimality achieving loop of feedback sequences $\Lset\str$ in the next section. Then, in Section \ref{sec:angular_Int}, we provide an optimal loop of component beams $\Iset\str$ and consecutively an optimal scanning beam set $\Sset\str$, and a lower bound on the maximum expected beamforming gain.

\section{Optimal Unimodal Loop}
\label{sec:unimodal}

In this section, we consider a particular family of $(b,d)$-unimodal loops for which we provide a construction and derive their properties. Then, we show that it is sufficient to only consider this family of $(b,d)$-unimodal loops in order to find an optimal loop of feedback sequences, $\Lset\str$. We start with the necessary definitions.
\subsection{Preliminaries and Definitions}

As discussed in Sec.~\ref{sec:BA}, given a pair $(\Iset,\Lset)$, where $\Iset$ and $\Lset$ have the same number of elements and $\Lset$ is $(b,d)$-unimodal, we can use Const.~\ref{cons:Prob_beam_set} in Thm.~\ref{thm:1t1} to generate a scanning beam set $\Sset$ corresponding to $(\Iset,\Lset)$. Moreover, each UR of $\Sset$ is the union of the component beams in $\Iset$ that have the same feedback sequences (same columns of $\Lset$).

Consider the optimization in \eqref{eq:optimization2}. For the case of the uniform prior, given that the number of URs (number of unique columns in $\Lset$), $M(b,d)$, is fixed, it is straightforward to see that to maximize the beamforming gain we need to use a loop of component beams, $\Iset$, that leads to equal width URs. As a result, we get the optimal beamforming gain of $M(b,d)$. Therefore, to achieve the optimal performance for the case of the uniform prior, we need to use a $(b,d)$-unimodal loop that has the maximum number of unique columns.  On the other hand, for a general prior $f_{\Psi}(\cdot)$, the more columns we have in the loop of feedback sequences, $\Lset$, the more degrees of freedom we will have in optimizing the URs. Motivated by these observations, we formally define a new class of $(b,d)$-unimodal loops below. 

\begin{Definition}[\textbf{Max Cardinality Loop}]
\label{def:MCL}
A max cardinality loop $\MCLset(b,d)$ is a $(b,d)$-unimodal loop that has \textit{i)} the maximum number of columns and \textit{ii)} the maximum number of unique columns (i.e., number of possible feedback sequences) among all possible $(b,d)$-unimodal loops. To simplify the notation, unless necessary, we refer to $\MCLset(b,d)$ by $\MCLset$.
\end{Definition}

In the sequel, we will show that max cardinality loop, $\MCLset$, exists by providing a construction and prove that it is sufficient to use an $\MCLset$ to find an optimal scanning beam set, $\Sset\str$. For our construction, we make use of the following observation.

Let us consider the loop of feedback sequences in \eqref{eq:IandL} corresponding to the scanning beam set $\{\Sset_i\}_{i\in[4]}$ shown in Fig.~\ref{fig:SCLI}. If we remove the last row of this loop and then the resulting consecutive repetitions of its columns, we get the loop
\begin{align}
\label{eq:L3}
\Lset(3,3) = \odot \left[\def\arraystretch{0.5}\begin{array}{ c c c c c c c c c c }
1 & 1 & 1 & 0 & 0 & 0\\
1 & 0 & 0 & 0 & 1 & 1\\
0 & 0 & 1 & 1 & 1 & 0
\end{array}\right]
\end{align}
which is the loop of feedback sequences corresponding to the scanning beam set $\{\Sset_i\}_{i\in[3]}$. In general, given a scanning beam set $\{\Sset_i\}_{i\in[b]}$ and its loop of feedback sequences $\Lset$, by removing the last row of  $\Lset(b,d)$ and the resulting consecutive repetitions, we get the loop of feedback sequences $\Lset(b-1,d)$ corresponding to  $\{\Sset_i\}_{i\in[b-1]}$. Motivated by this observation, we define a parent-child hierarchy for the $(b,d)$-unimodal loops formally defined below.

\begin{Definition}[\textbf{Parent and Child Loops}]
\label{def:parent}
For a $(b,d)$-unimodal loop $\Lset(b,d)$, the loop $\Lset(b-i,d)$ obtained by removing the last $i$ rows of $\Lset(b,d)$ and then removing the consecutive column repetitions is defined as the \emph{parent loop of order $i$}. The parent loop of order one is called the \emph{parent loop}. Conversely, the $\Lset(b,d)$ is the child loop of $\Lset(b-1,d)$.
\end{Definition}

Note that the unique parent loop of $(b,d)$-unimodal loop is $(b-1,d)$-unimodal. However, a parent loop can result in different child loops. For our results, we also need to define the following notation related to the unimodal loops in Def.~\ref{def:ul}.

\begin{Definition}[\textbf{First and Last Flip Positions}]
\label{def:flfp}
Consider a unimodal binary loop of a fixed length. This loop has two important positions, the position of i) the first bit which is flipped, denoted by $i_f$, and ii) the position of the bit whose next bit is flipped again, denoted by $i_l$. The case where the loop is all ones or zeros is denoted by $i_f = i_l = \infty$. We refer to $i_f$ and $i_l$ by the first and last flip positions, respectively.  
\end{Definition}

As an example, the loop $\odot[0~0~0~1~0~0~0]$ is a unimodal loop of length seven with $i_f = i_l = 4$.

\subsection{Child Loop Construction and Properties}

To construct an $\MCLset$, we first study the properties of the parent-child hierarchy in Def.~\ref{def:parent} and discuss child loop construction form a parent loop. To this end, let us consider the following example.

\begin{Example}
\label{ex:const_p1}
Consider the loop $\Lset(4,3)$ and its parent loop $\Lset(3,3)$ in \eqref{eq:IandL} and \eqref{eq:L3}, respectively. We can generate $\Lset(4,3)$ from $\Lset(3,3)$ as follows. 
Viewing each column as a feedback sequence, and creating the sub-loops of columns of $\Lset(4,3)$ consisting of columns with the same prefixes of length one allows us to consider each possible scanning beam used at the fourth time-slot separately. So, let us consider the sub-loops of $\Lset(3,3)$ consisting of columns with the same prefixes of length one.
We get
\begin{align}
\label{eq:exc0}
    \tilde{\Lset}_1(3,3) =\odot \left[\def\arraystretch{0.5}\begin{array}{ c c c }
1 & 1 & 1 \\
1 & 0 & 0 \\
0 & 0 & 1
\end{array}\right],\qquad
\tilde{\Lset}_2(3,3) =\odot \left[\def\arraystretch{0.5}\begin{array}{ c c c}
0 & 0 & 0\\
0 & 1 & 1\\
1 & 1 & 0
\end{array}\right].
\end{align}

Since $\Lset(4,3)$ should be $(4,3)$-unimodal, it cannot have more than three consecutive columns with the same prefix of length three otherwise, it will not follow Def.~\ref{def:ul}. We will prove this rigorously as part of the proof of Thm.~\ref{thm:gcc}. Now, let us repeat each column in the sub-loops of the loop $\Lset(3,3)$ three times consecutively. We have
 \begin{align}
\label{eq:exc1}
    \widehat{\Lset}_1(3,3) =\odot \left[\def\arraystretch{0.5}\begin{array}{ c c c c c c c c c}
1 & 1 & 1 & 1 & 1 & 1 & 1 & 1 & 1\\
1 & 1 & 1 & 0 & 0 & 0 & 0 & 0 & 0 \\
0 & 0 & 0 & 0 & 0 & 0 & 1 & 1 & 1
\end{array}\right],\\
\widehat{\Lset}_2(3,3) =\odot \left[\def\arraystretch{0.5}\begin{array}{ c c c c c c c c c}
0 & 0 & 0 & 0 & 0 & 0 & 0 & 0 & 0\\
0 & 0 & 0 & 1 & 1 & 1 & 1 & 1 & 1\\
1 & 1 & 1 & 1 & 1 & 1 & 0 & 0 & 0
\end{array}\right].
\end{align}

Considering the counterpart BA problem, the column repetition is to account for the possible addition of the component beams when adding new beams to the scanning beam set. Now, we add a row to each of these sub-loops. These rows, which are unimodal correspond to the scanning beams used at the fourth time-slot. In this example, we use the rows $\Pset_1 = \odot [0$ $0$ $0$ $0$ $1$ $0$ $0$ $0$ $0]$ and  $\Pset_2 = \odot [0$ $0$ $1$ $1$ $0$ $0$ $0$ $0$ $0]$ to $\widehat{\Lset}_1(3,3)$ an $\widehat{\Lset}_3(3,3)$, respectively. We get

\begin{align}
\label{eq:exc2}
    \overline{\Lset}_1(4,3) =\odot \left[\def\arraystretch{0.5}\begin{array}{ c c c c c c c c c}
1 & 1 & 1 & 1 & 1 & 1 & 1 & 1 & 1\\
1 & 1 & 1 & 0 & 0 & 0 & 0 & 0 & 0 \\
0 & 0 & 0 & 0 & 0 & 0 & 1 & 1 & 1 \\
0 & 0 & 0 & 0 & 1 & 0 & 0 & 0 & 0
\end{array}\right],\\
\overline{\Lset}_2(4,3) =\odot \left[\def\arraystretch{0.5}\begin{array}{ c c c c c c c c c}
0 & 0 & 0 & 0 & 0 & 0 & 0 & 0 & 0\\
0 & 0 & 0 & 1 & 1 & 1 & 1 & 1 & 1\\
1 & 1 & 1 & 1 & 1 & 1 & 0 & 0 & 0\\
0 & 0 & 1 & 1 & 0 & 0 & 0 & 0 & 0
\end{array}\right].
\end{align}

Next, to get the child loop, we need to combine the sub-loops $\overline{\Lset}_1(4,3)$ and $\overline{\Lset}_2(4,3)$ into one loop while preserving the order of the columns inside each sub-loop and order of their prefixes of length three in the loop $\Lset(3,3)$. This step guarantees that the cyclic order of the columns in the loop $\Lset(4,3)$ is aligned with that of the loop $\Lset(3,3)$. The resulting loop is
\begin{align}
\label{eq:exc3}
    \underline{\Lset}_1(4,4) =\odot \left[\def\arraystretch{0.5}\begin{array}{ c c c c c c c c c c c c c c c c c c}
1 & 1 & 1 & 1 & 1 & 1 & 1 & 1 & 1 & 0 & 0 & 0 & 0 & 0 & 0 & 0 & 0 & 0\\
1 & 1 & 1 & 0 & 0 & 0 & 0 & 0 & 0 & 0 & 0 & 0 & 1 & 1 & 1 & 1 & 1 & 1 \\
0 & 0 & 0 & 0 & 0 & 0 & 1 & 1 & 1 & 1 & 1 & 1 & 1 & 1 & 1 & 0 & 0 & 0 \\
0 & 0 & 0 & 0 & 1 & 0 & 0 & 0 & 0 & 0 & 0 & 1 & 1 & 0 & 0 & 0 & 0 & 0
\end{array}\right].
\end{align}

In the BA problem, this specific way of combining the sub-loops makes sure that the order of the component beams after adding new scanning beams does not contradict with the order of the component beams before the addition. 
Finally, to get $\Lset(4,3)$, we remove the consecutive repetitions of the columns.
\end{Example}

Generalizing Ex.~\ref{ex:const_p1}, we have the following theorem. 

\begin{Theorem}[\textbf{Child 
Loop Construction}]
\label{thm:gcc}
Given a parent loop $\Lset(b-1,d)$, all possible $(b,d)$-unimodal child loops, $\Lset(b,d)$s, can be generated using the following construction.
\begin{Construction}
~
\label{con:child}
\begin{enumerate}
    \item Sub-loop formation: Form sub-loops of $\Lset(b-1,d)$ consisting of columns with the same prefix of length $b-d$. If $b\leq d$, then $\Lset(b-1,d)$ has one sub-loop which is itself. 
    \item Column repetition: Repeat the columns in each sub-loop three times consecutively.
    \item Loop concatenation: Add a unimodal row to each sub-loop.
    \item Ordered combination: Combine the sub-loops to form a bigger loop by preserving the order of the columns in each sub-loop and order of their prefixes of length $b-1$ in the parent loop $\Lset(b-1,d)$. 
    \item Repetition removal: Remove the consecutive repetitions of the columns. 
\end{enumerate}
\end{Construction}
\end{Theorem}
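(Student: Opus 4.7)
The plan is to prove both directions: \emph{soundness} — every loop output by Construction~\ref{con:child} is $(b,d)$-unimodal with parent $\Lset(b-1,d)$; and \emph{completeness} — every $(b,d)$-unimodal child of $\Lset(b-1,d)$ arises from some choice of appended unimodal rows in step~3.

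For soundness, I would verify the two axioms of Thm.~\ref{thm:Lprop} for the output loop. Rows $1$ through $b-1$ inherit their unimodality from the parent: tripling the columns in step~2 only lengthens the existing maximal runs of $0$s and $1$s, and the final removal in step~5 collapses them back, so for each $i\leq b-1$ and each prefix of length $i-d$, the $i$-th bits still trace a unimodal loop. The sub-loops formed in step~1 correspond to contiguous blocks of the parent (by unimodality of row $b-1$ on each length-$(b-1-d)$ prefix), so the ordering in step~4 literally reconstructs the parent's cyclic order on the outer $b-1$ coordinates; this also shows that the parent of the output is $\Lset(b-1,d)$. For row $b$, the columns of the child sharing a given length-$(b-d)$ prefix are exactly the descendants of one sub-loop, and their $b$-th bits are a duplicate-collapsed reading of the unimodal row appended in step~3, hence unimodal. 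Property (2) follows because step~5 deletes every consecutive repetition, and any two columns lying in different sub-loops already differ in their first $b-d$ bits and therefore cannot be duplicates at the seams created by step~4.

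For completeness, I would start from an arbitrary $(b,d)$-unimodal child $\Lset(b,d)$ of $\Lset(b-1,d)$ and reverse-engineer the construction. The central claim to establish is the structural lemma: within $\Lset(b,d)$, every maximal contiguous run of columns sharing the same first $b-1$ bits has length at most $3$. Granted this, I pad each such run up to exactly three columns in the unique way compatible with a unimodal row~$b$, thereby recovering step~2; read off the appended unimodal row in step~3 from row $b$ of these padded blocks; and observe that step~4's ordering is forced because the child's cyclic order restricted to the parent's coordinates is, by the parent-child definition, the parent's order.

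To prove the length-$\leq 3$ claim, note that within such a run all adjacent columns must differ (otherwise property (2) fails in the child), so their $b$-th bits form a strictly alternating $0/1$ sequence. These columns share the length-$(b-d)$ prefix, so they lie in a single sub-loop; by property (1) at $i=b$, row $b$ on that sub-loop is a unimodal binary loop, which has at most two transitions $0\!\to\!1$ and $1\!\to\!0$, so an alternating subsequence can have length at most $3$. This quantitative bound is exactly the main obstacle of the proof: it is the step that both justifies the constant $3$ chosen in step~2 and drives the completeness direction. A secondary technicality is the boundary case $b\leq d$, in which $b-d$ is non-positive, every column shares the empty prefix, step~1 produces a single sub-loop, and step~4 is vacuous; this reduces to the single-sub-loop specialization of the same argument with no new content.
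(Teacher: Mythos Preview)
Your completeness argument --- the direction the theorem actually asserts --- is essentially the paper's proof. Both rest on the same structural lemma (at most three consecutive columns of the child share a length-$(b{-}1)$ prefix), and your justification (property~(2) forces the $b$-th bits to alternate within such a run; property~(1) at $i=b$ bounds the number of transitions by two; hence the run has length at most three) is exactly the paper's argument, stated more explicitly. The paper then pads each such run to exactly three columns and reads off the appended unimodal row $\Pset$, just as you do.

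You additionally argue soundness, which the paper does not attempt; the statement only claims that every child \emph{can be generated}, not that every output of the construction is a valid child. One step in your soundness sketch is off: the sub-loops formed in step~1 need not be contiguous blocks of the parent. Contiguity of the length-$(b{-}d)$ prefix classes in $\Lset(b-1,d)$ is equivalent to the order-$(d{-}1)$ parent $\Lset(b{-}d,d)$ having no non-consecutive column repetitions, and that can fail --- the loop in~\eqref{eq:IandL} already has the column $1000$ at two non-adjacent positions, so any chain of children built on top of it will have non-contiguous prefix classes once $b-d$ reaches $4$. This does not damage your completeness direction, where the cyclic order is read off the given child rather than rebuilt from an assumed block structure; and since step~4 of the construction is phrased in terms of the parent's full column order (not a block order), the conclusion that the output has parent $\Lset(b-1,d)$ can still be recovered without the contiguity claim.
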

\begin{proof}
The proof is provided in Appendix~\ref{app:gcc}.
\end{proof}

Next, we will quantify the maximum number of columns and unique columns one can have in a child loop given a parent loop. Let us consider the following example first.

\begin{Example}
\label{ex:const_p}
Consider the parent loop of $\Lset(3,3)$ given in \eqref{eq:L3} which is
 \begin{align}
& {\Lset}(2,3) =\odot \left[\def\arraystretch{0.5}\begin{array}{ c c c c c c c c c}
1 & 1 & 0 & 0 \\
1 & 0 & 0 & 1
\end{array}\right].
 \end{align}
Let us use Const.~\ref{con:child} (Thm.~\ref{thm:gcc}) to create child loops of this parent loop. We first need to create sub-loops of the columns with same prefix of length $ 3-3 = 0 $ which means there is only one sub-loop. Then, we need to repeat each column three times (Steps 1 and 2). We get
 \begin{align}
& \widehat{\Lset}(2,3) =\odot \left[\def\arraystretch{0.5}\begin{array}{ c c c c c c c c c c c c}
1 & 1 & 1 & 1 & 1 & 1 & 0 & 0 & 0 & 0 & 0 & 0\\
1 & 1 & 1 & 0 & 0 & 0 & 0 & 0 & 0 & 1 & 1 & 1
\end{array}\right]
 \end{align}
 For Step~3 (loop concatenation), we consider three cases using the unimodal loops $\Pset_1 = \odot[1~1~1~0~0~0~0~0~0~0~0~0]$, $\Pset_2 = \odot[0~1~0~0~0~0~0~0~0~0~0~0]$, and $\Pset_3 = \odot[0~1~1~1~1~0~0~0~0~0~0~0]$. After Steps 4 and 5, for the considered unimodal loops, we get the child loops
 \begin{gather}
\Lset_1(3,3) =\odot \left[\def\arraystretch{0.5}\begin{array}{ c c c c c c c c c}
1 & 1 & 0 & 0\\
1 & 0 & 0 & 1\\
1 & 0 & 0 & 0
\end{array}\right],\quad
\Lset_2(3,3) =\odot \left[\def\arraystretch{0.5}\begin{array}{ c c c c c c c c c}
1 & 1 & 1 & 1 & 0 & 0 \\
1 & 1 & 1 & 0 & 0 & 1 \\
0 & 1 & 0 & 0 & 0 & 0 
\end{array}\right],\\
\Lset_3(3,3) =\odot \left[\def\arraystretch{0.5}\begin{array}{ c c c c c c c c c}
1 & 1 & 1 & 1 & 0 & 0 \\
1 & 1 & 0 & 0 & 0 & 1 \\
0 & 1 & 1 & 0 & 0 & 0 
\end{array}\right],
 \end{gather}
respectively. We observe that $\Lset_1(3,3)$, has the same number of columns and unique columns as of $\Lset(2,3)$. The Loop $\Lset_2(3,3)$ has two columns and one unique column and $\Lset_3(3,3)$ has two columns and two unique columns more than $\Lset(2,3)$.
\end{Example}

Example~\ref{ex:const_p} shows that depending on the concatenated unimodal loop at Step~3 of Const.~\ref{con:child}, the difference between the number of columns (or unique columns) of the child and parent loop can vary. In the next Proposition, we quantify the maximum 
changes between a child loop and its parent loop in terms of the number of columns and unique columns and provide unimodal loops that can be used at the loop concatenation step (Step~3) that achieve this maximum changes.

\begin{Proposition}[\textbf{Max Addition Loop Concatenation}]
\label{prop:const}
For the purpose of constructing a max cardinality loop, $\MCLset$, we restrict our attention to the following two cases regarding the concatenated unimodal binary loop at Step~3 of Const.~\ref{con:child} (Thm.~\ref{thm:gcc}).

    \textbf{Case 1}: If all the columns in a sub-loop are identical, then, the concatenated loop can at most increase the number of columns and the number of unique columns in $\Lset(b,d)$ by two and one, respectively, compared to $\Lset(b-1,d)$. This is achievable using a unimodal binary loop whose first and last flip positions as defined in Def.~\ref{def:flfp} satisfy $i_f = i_l$ and ${\rm mod}(i_f,3) = 2$. 
    
    \textbf{Case 2}: If a sub-loop has two or more different columns, then, the concatenated loop can at most increase the number of columns and unique columns in $\Lset(b,d)$ each by two compared to $\Lset(b-1,d)$. This is achievable using a unimodal binary loop whose first and last flip positions correspond to different columns in the sub-loop and ${\rm mod}(i_f,3) = {\rm mod}(i_l,3) = 2$.
\end{Proposition}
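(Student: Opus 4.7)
The plan is to analyze Construction~\ref{con:child} at the sub-loop level, quantifying how the concatenation of one unimodal row on a given sub-loop changes the global column count and unique-column count of the child loop.

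First I would establish a structural lemma: within any sub-loop of the $(b-1,d)$-unimodal parent, each unique column appears exactly once (i.e.\ $u=n$ in every sub-loop). The argument is recursive: Property~1 of Theorem~\ref{thm:Lprop} yields unimodality of each row on its prefix-$(i-d)$ class viewed as a cycle, and a cyclic sub-arc of any unimodal cycle inherits unimodality when closed into its own cycle. Descending through rows $b-d+1,\ldots,b-1$ partitions the sub-loop into a hierarchy of nested arcs whose leaves each specify a column completely; Property~2 (no consecutive column repetitions in the parent) then forces each leaf to contain at most one position. As a consequence, Case~1 (all columns identical) is equivalent to $n=1$, while Case~2 ($n\geq 2$) has all columns distinct. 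This structural lemma is the main obstacle; once in place, the remainder of the proof is a direct enumeration.

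Next I would analyze Steps~2--5 on a single sub-loop. After tripling, the sub-loop becomes $n$ ``triples'' of three identical-base-column positions; since adjacent triples correspond to consecutive parent columns, they have distinct base columns by Property~2, so any Step-5 collapse occurs strictly within a single triple. If a triple carries bits $b_1 b_2 b_3$, the number of within-triple collapses $c_k$ is $2$ for pattern $000$/$111$ (no internal flips), $1$ for exactly one internal flip, and $0$ for patterns $010$ and $101$ (two internal flips). The child sub-loop's column count is $3n-\sum_k c_k$, so the column increase equals $2n-\sum_k c_k$. Because the concatenated row is unimodal on the $3n$-cycle it carries at most two flips, and a flip at index $i_f$ is internal precisely when $i_f\bmod 3\in\{0,2\}$ while a flip at $i_l+1$ is internal precisely when $i_l\bmod 3\in\{1,2\}$; each internal flip reduces $\sum_k c_k$ by one from its baseline $2n$, so the column increase is bounded by $+2$ in both cases. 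For unique columns, since each $c_m$ occupies exactly one triple, the child contains both extensions $(c_m,0)$ and $(c_m,1)$ only if that triple is ``mixed'' (carries both bit values); at most two triples can be mixed (those containing $i_f$ and $i_l+1$), which gives $+2$ unique columns in Case~2, while in Case~1 only the lone triple exists, capping the unique increase at $+1$.

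Finally, for achievability I would exhibit the prescribed unimodal rows. In Case~1, the row with $i_f=i_l$ and $i_f\equiv 2\pmod 3$ produces pattern $010$ in the single triple, giving $c_1=0$ (hence $+2$ columns) together with a mixed triple (hence $+1$ unique). In Case~2, choosing $i_f$ and $i_l$ both $\equiv 2\pmod 3$ but located in two distinct triples $k\neq l$ yields triple~$k$ with pattern $011$ and triple~$l$ with pattern $110$, each contributing one internal flip for a total reduction of $2$ in $\sum_k c_k$ (hence $+2$ columns); both triples are mixed, yielding $+2$ unique columns. The dichotomy between the two cases comes exactly from whether $i_f$ and $i_l$ correspond to the same sub-loop column (Case~1) or to distinct ones (Case~2), which also forces $k=l$ versus $k\neq l$ in the triple indexing.
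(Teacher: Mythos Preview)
Your flip-counting for the column-increase bound and your explicit achievability patterns are correct and match the paper's (three-sentence) proof; the paper omits achievability entirely, so on that point you are strictly more complete.

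The structural lemma, however, is false, and with it your route to the unique-column bound. Take $d=4$ and the $(4,4)$-unimodal parent with rows $11111000$, $11011111$, $10001111$, $00000010$: eight columns, each row unimodal on the full cycle, no consecutive repetitions. Its prefix-``$1$'' sub-loop is $\odot[1110,\,1100,\,1000,\,1100,\,1110]$, in which both $1110$ and $1100$ repeat. Your recursive descent breaks because closing an arc into its own cycle manufactures new adjacencies (here $5\!\leftrightarrow\!1$, and one level deeper $2\!\leftrightarrow\!4$) that were absent in the parent, so the parent's Property~2 says nothing about those pairs. Running Construction~\ref{con:child} on this sub-loop with the unimodal row $0^{7}1^{8}$ on the fifteen tripled positions then yields a unique-column increase of $+3$: each repeated base column receives bit $0$ on one occurrence and bit $1$ on the other without spending an internal flip, while the single internal flip mixes the lone column $1000$. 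This violates both your lemma and the proposition's literal Case-2 bound.

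The paper's proof has the identical gap---it never rules out repeated base columns acquiring both extensions across the two constant regions of the added row. Since the proposition is invoked only to justify the recursion of Theorem~\ref{thm:size}, and the counterexample's parent is far below max cardinality, the formula for $M^{*}(b,d)$ may well survive; but a rigorous per-sub-loop bound must either be restricted to parents with no repeated columns (for $d\ge 2$ exactly the $N^{*}=M^{*}$ property maintained along Construction~\ref{con:maxC}) or be replaced by a global argument that trades the extra gain from sub-loop repetitions against the parent's unique-column deficit.
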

\begin{proof}
Since the bits in a unimodal loop only flip at positions $i_f$ and $i_l$, all the added repeated columns at Step~2 of the Const.~\ref{con:child} 
that are not the repetition of the columns corresponding $i_f$ and $i_l$ will be omitted at Step~5 of the construction. Therefore, the number of columns of ${\Lset}(b,d)$ can at most increase by two. Furthermore, if $i_f$ and $i_l$ correspond to identical columns, the created new columns (if any) in $\Lset(b,d)$ will be the same. As a result, the number of unique columns in $\Lset(b,d)$ can increase at most by two compared to $\Lset(b-1,d)$ if $i_f$ and $i_l$ correspond to two different columns in the sub-loop and by one, otherwise.
\end{proof}

\subsection{Optimal Loop of Feedback Sequences}
\label{subsec:olfs}
As discussed, we are interested in the max cardinality unimodal loop for a given BA duration $b$ and delay $d$. So far, we have provided a construction for child loops from a parent loop (Const.~\ref{con:child} 
of Thm.~\ref{thm:gcc}
) and quantified the maximum number of columns and unique columns that can be added (per sub-loop) in this construction to the child loop (Prop.~\ref{prop:const}). 

To find a max cardinality loop for any $b$ and $d$, let us start from $\MCLset(1,d) = \odot[0,1]$ and increase duration $i$ by one at a time using Const.~\ref{con:child},
 until we reach $b$ while applying Prop.~\ref{prop:const}. We have two possibilities. 

\textbf{When $\mathbf{d= 1}$}, for any $i$, each sub-loop formed at Step~1 of the Const.~\ref{con:child} 
only consists of identical columns. This means that for $d=1$, we can only (and always) apply Case~$1$ of Prop.~\ref{prop:const} for each sub-loop and each $i$. Therefore, we can get $\MCLset(b,1)$ using this method.

\textbf{When $\mathbf{d\geq 2}$}, for any $i$, we want to ensure that a sub-loop formed at Step~1 of the Const.~\ref{con:child} always corresponds to Case~2 in Prop.~\ref{prop:const}. However, this might not always be the case so we impose additional constraints motivated by the following example.  
\begin{Example}
\label{ex:constmax}
Consider the loop $\Lset(3,3)$ in Ex.~\ref{ex:const_p}. Using Const.~\ref{con:child} and Prop.~\ref{prop:const}, one can show that this is an $\MCLset(3,3)$. Let us now use Const.~\ref{con:child} and Prop.~\ref{prop:const} to get an $\MCLset(4,3)$. First, we form the sub-loops with the same prefix of length one which are as in \eqref{eq:exc0}. Next, we repeat each column three times. We have
 \begin{align}
& \widehat{\Lset}_1(3,3) =\odot \left[\def\arraystretch{0.5}\begin{array}{ c c c c c c c c c}
1 & 1 & 1 & 1 & 1 & 1 & 1 & 1 & 1\\
1 & 1 & 1 & 0 & 0 & 0 & 0 & 0 & 0\\
0 & 0 & 0 & 0 & 0 & 0 & 1 & 1 & 1 
\end{array}\right] \\
 &\widehat{\Lset}_2(3,3) =\odot \left[\def\arraystretch{0.5}\begin{array}{ c c c c c c c c c}
0 & 0 & 0 & 0 & 0 & 0 & 0 & 0 & 0\\
0 & 0 & 0 & 1 & 1 & 1 & 1 & 1 & 1\\
1 & 1 & 1 & 1 & 1 & 1 & 0 & 0 & 0
\end{array}\right].
 \end{align}
 
Next, consider the following unimodal binary loops $\Pset_1 =$ $\odot[0$ $0$ $0$ $0$ $1$ $1$ $1$ $0$ $0]$ and $\Pset_2 =$ $ \odot[0$ $1$ $1$ $1$ $0$ $0$ $0$ $0$ $0]$, which satisfy the conditions in Prop.~\ref{prop:const} Case~2 and concatenate them to $\widehat{\Lset}_2(3,3)$ and $\widehat{\Lset}_2(3,3)$, respectively. We get 
 \begin{align}
     & \underline{\Lset}(4,3) =\odot \left[\def\arraystretch{0.5}\begin{array}{ c c c c c c c c c c c c c c c c c c}
1 & 1 & 1 & 1 & 1 & 1 & 1 & 1 & 1 & 0 & 0 & 0 & 0 & 0 & 0 & 0 & 0 & 0\\
1 & 1 & 1 & 0 & 0 & 0 & 0 & 0 & 0 & 0 & 0 & 0 & 1 & 1 & 1 & 1 & 1 & 1\\
0 & 0 & 0 & 0 & 0 & 0 & 1 & 1 & 1 & 1 & 1 & 1 & 1 & 1 & 1 & 0 & 0 & 0\\
0 & 0 & 0 & 0 & 1 & 1 & 1 & 0 & 0 & 0 & 1 & 1 & 1 & 0 & 0 & 0 & 0 & 0 
\end{array}\right] 
 \end{align}
 Finally, by removing the consecutive repetitions we get 
 \begin{align}
     & {\MCLset}(4,3) =\odot \left[\def\arraystretch{0.5}\begin{array}{ c c c c c c c c c c c c c c c c c c}
 1 & 1 & 1 & 1 & 1 & 0 & 0 & 0 & 0 & 0\\
 1 & 0 & 0 & 0 & 0 & 0 & 0 & 1 & 1 & 1\\
 0 & 0 & 0 & 1 & 1 & 1 & 1 & 1 & 1 & 0\\
 0 & 0 & 1 & 1 & 0 & 0 & 1 & 1 & 0 & 0 
\end{array}\right], 
 \end{align}
which is a max cardinality loop for $b=4$ and $d=3$. Now, let's increase $b$ further. Using Const.~\ref{con:child}, we will have four sub-loops, one of which is 
 \begin{align}
 \label{eq:osl}
     \tilde{\Lset}_1(4,3) =\odot \left[\def\arraystretch{0.5}\begin{array}{ c c c}
1 \\
1 \\
0 \\
0
\end{array}\right].
\end{align}
This sub-loop only consists of one column and so we cannot use Case~2 in Prop.~\ref{prop:const}. This happens since neither the first or last flip positions, $i_f$ or $i_l$ of the loop $\Pset_1$ fell on the column with the prefix $11$. As a result, the number of columns with the prefix $11$ remains one which leads to the one column sub-loop in \eqref{eq:osl}. To solve this issue, we can use the unimodal loop $\Pset_3 =$ $\odot\{0$ $1$ $1$ $1$ $0$ $0$ $0$ $0$ $0\}$ instead of  $\Pset_1$. Hence, we get 
\begin{align}
\label{eq:maxL2}
     & {\MCLset}(4,3) =\odot \left[\def\arraystretch{0.5}\begin{array}{ c c c c c c c c c c c c c c c c c c}
 1 & 1 & 1 & 1 & 1 & 0 & 0 & 0 & 0 & 0\\
 1 & 1 & 0 & 0 & 0 & 0 & 0 & 1 & 1 & 1\\
 0 & 0 & 0 & 0 & 1 & 1 & 1 & 1 & 1 & 0\\
 0 & 1 & 1 & 0 & 0 & 0 & 1 & 1 & 0 & 0 
\end{array}\right].
\end{align}

Now, we observe that all prefixes of length two have at least two unique columns, and if we want to further increase $b$, we can use Case~2 in Prop.~\ref{prop:const}. 
\end{Example}

Generalizing  this example, to make sure that each sub-loop includes at least two unique columns, at Step~3 of Const.~\ref{con:child}, we need to use a unimodal loop whose first and last flip positions $i_f$ and $i_l$ correspond to the columns (if any) with different prefixes of length $i-d+1$. On the other hand, the maximum number of unique prefixes of length $i-d+1$ in a sub-loop is two since each sub-loop consists of columns with the same prefix of length $i-d$ and the $(i-d+1)^{\rm th}$ bit is either one or zero. Therefore, we have the following result.

\begin{Theorem}[\textbf{Max Cardinality Loop}]
\label{thm:max_size}
The following construction results in an $\MCLset(b,d)$.
\begin{Construction}
\label{con:maxC}
Start from $i = 1$, set $\Lset(1,d)=$ $\odot\{1,0\}$. Increase the value of $i$ by one till we have $i =b$. For each $i$ use Const.~\ref{con:child} and generate the concatenated binary unimodal loop at Step~3 of Const.~\ref{con:child} as follows. 
\begin{itemize}
    \item For $d=1$, use a unimodal loop in which first and last flip positions satisfy $i_f = i_l$ with ${\rm mod}(i_f,3) = 2$.
    \item For $d\geq 2$, use a unimodal loop whose first and last flip positions fall on columns with different prefixes of length $i-d+1$ (if any, else any two different columns) with ${\rm mod}(i_f,3) = {\rm mod}(i_l,3) = 2$. 
\end{itemize}
\end{Construction}
\end{Theorem}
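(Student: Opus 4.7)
The plan is to proceed by induction on the BA duration $b$, using Thm.~\ref{thm:gcc} (which guarantees Const.~\ref{con:child} produces valid $(b,d)$-unimodal loops from a parent $(b-1,d)$-unimodal loop) and Prop.~\ref{prop:const} (which bounds the per-sub-loop additions) as the two main tools. The base case $b=1$ is immediate: $\Lset(1,d)=\odot[1,0]$ contains both possible single-bit columns, which is trivially max-cardinality. For the inductive step, assume that the construction yields $\MCLset(b-1,d)$, then split on the delay regime as in the statement.

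For $d=1$, I observe that the sub-loops formed at Step 1 of Const.~\ref{con:child} at iteration $i$ are grouped by a common prefix of length $i-1$, which is the full column length. Each sub-loop is therefore a consecutive repetition of a single column, so Case 1 of Prop.~\ref{prop:const} applies to every sub-loop. The prescribed unimodal loop with $i_f=i_l$ and ${\rm mod}(i_f,3)=2$ attains the per-sub-loop maximum of two added columns and one added unique column. Because the per-sub-loop contributions add independently to the totals, combining this with the inductive hypothesis that the parent is $\MCLset(b-1,1)$ yields the $\MCLset(b,1)$ conclusion.

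For $d\geq 2$, the main work is to show Case 2 of Prop.~\ref{prop:const} applies uniformly, and I would carry this by propagating the following invariant through the induction: at iteration $i$, every sub-loop of $\Lset(i-1,d)$ with a common prefix of length $i-d$ contains columns with two distinct prefixes of length $i-d+1$ whenever two such prefixes are feasible. This invariant is exactly what makes the prescribed concatenated unimodal loop realizable, since we can then place $i_f$ and $i_l$ on columns with the required different prefixes, and the three-fold column repetition in Step 2 of Const.~\ref{con:child} provides enough positions within each column to satisfy the additional constraint ${\rm mod}(i_f,3)={\rm mod}(i_l,3)=2$. The main obstacle is showing the invariant propagates forward: the $i_f,i_l$ condition at iteration $i$ forces the newly added $i^{\rm th}$ row to take both values within each sub-loop grouped by any prefix of length $i-d+1$, which after the repetition removal in Step 5 is precisely what creates the needed prefix diversity at depth $i-d+2$ used by iteration $i+1$.

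Once the invariant is established, each sub-loop at iteration $i$ attains the Case 2 per-sub-loop maximum of two added columns and two added unique columns. To upgrade this per-sub-loop optimality to global optimality, I would observe that both the column count and the unique-column count of $\Lset(b,d)$ decompose additively over the sub-loops formed at iteration $b$, so maximizing each summand (inductive hypothesis for the parent plus Prop.~\ref{prop:const} for the addition) maximizes the total. The delicate point worth spelling out is that no alternative choice earlier in the induction could beat this greedy scheme: any parent loop that sacrifices current cardinality in hopes of enabling more future additions would, by Prop.~\ref{prop:const}, still be capped at two/two per sub-loop, so no such trade-off is profitable.
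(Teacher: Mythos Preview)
Your proposal is correct and follows essentially the same approach as the paper: the paper does not give a separate proof section for this theorem but instead justifies it in the discussion preceding the statement, arguing inductively from $\Lset(1,d)=\odot[1,0]$, observing that for $d=1$ every sub-loop is constant so Case~1 of Prop.~\ref{prop:const} always applies, and for $d\geq 2$ imposing the prefix-diversity constraint on $i_f,i_l$ so that Case~2 applies at every step. Your write-up is in fact more explicit than the paper's, particularly in naming and propagating the invariant and in flagging the greedy-optimality point, which the paper leaves implicit.
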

Next, we provide the exact number of columns and unique columns for max cardinality loops.

\begin{Theorem}
\label{thm:size}
Denoting the maximum number of columns and unique columns of a max cardinality $(b,d)$-unimodal loop by $N\str(b,d)$ and $M\str(b,d)$, respectively, we have 
\begin{itemize}
    \item For $d=1$,
\begin{align}
\label{eq:maxd1}
M\str(b,1) &= 2^b, \quad N\str(b,1) =  2^{b+1}-2.
\end{align} 
\item For $d\geq 2$,
\begin{align}
\label{eq:maxd}
\begin{aligned}
M\str(b,d)&= N\str(b,d) = \begin{cases}
M\str(b-1,d) + 2M\str(b-d,d) &b>d, \\
2b  &b \leq d.
\end{cases}
\end{aligned}
\end{align}
\end{itemize}
\end{Theorem}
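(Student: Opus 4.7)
The plan is to prove both closed forms by induction on $b$, treating the cases $d=1$ and $d\geq 2$ in parallel. At each inductive step, the achievability direction follows from Construction~\ref{con:maxC} (Thm.~\ref{thm:max_size}) combined with Prop.~\ref{prop:const}, while the converse relies on Thm.~\ref{thm:gcc}, which guarantees that every $(b,d)$-unimodal loop arises from its unique $(b-1,d)$-unimodal parent via Construction~\ref{con:child}. The induction then reduces the problem to bounding, and matching, the number of sub-loops formed at Step~1 and the per-sub-loop gain allowed by Prop.~\ref{prop:const}.

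The key bookkeeping is to count the sub-loops produced at Step~1 of Construction~\ref{con:child}, which correspond to the distinct prefixes of length $b-d$ appearing in columns of the parent. By the induction hypothesis applied to the parent of order $d-1$, namely $\mathcal{L}(b-d,d)$, this count is at most $M^*(b-d,d)$ for $b>d$, with equality attained by Construction~\ref{con:maxC}; for $b\leq d$ only the empty prefix occurs so there is a single sub-loop; and for $d=1$ the prefix length is $b-1$ (the full column length), so each sub-loop consists of identical columns and the number of sub-loops is exactly $M^*(b-1,1)$. With the sub-loop count in hand, Prop.~\ref{prop:const} bounds the per-sub-loop increment: Case~1 ($+2$ columns, $+1$ unique) applies when the sub-loop is made of identical columns (the only possibility when $d=1$), and Case~2 ($+2$ to both counts) applies when the sub-loop admits two distinct prefixes of length $b-d+1$, which Construction~\ref{con:maxC} is precisely designed to ensure when $d\geq 2$. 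Summing these per-sub-loop contributions over all sub-loops yields $M^*(b,1)=2M^*(b-1,1)$ and $N^*(b,1)=N^*(b-1,1)+2M^*(b-1,1)$ in the $d=1$ case, and $M^*(b,d)=N^*(b,d)=M^*(b-1,d)+2M^*(b-d,d)$ in the $d\geq 2$, $b>d$ case, while for $b\leq d$ the single sub-loop gives $M^*(b,d)=N^*(b,d)=M^*(b-1,d)+2$.

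Starting from the base case $\mathcal{L}(1,d)=\odot[1,0]$, which gives $M^*(1,d)=N^*(1,d)=2$, the $d=1$ geometric recursion solves to $M^*(b,1)=2^b$, and then $N^*(b,1)=N^*(b-1,1)+2^b$ telescopes to $2^{b+1}-2$. For $d\geq 2$, the additive recursion in the regime $b\leq d$ gives $M^*(b,d)=N^*(b,d)=2b$ immediately, and the $b>d$ recursion is already in the stated form, so no further manipulation is required.

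The subtlest point, and the anticipated main obstacle, is the converse: one must verify that the number of distinct length-$(b-d)$ prefixes in \emph{any} $(b-1,d)$-unimodal loop is bounded by $M^*(b-d,d)$, and that the per-sub-loop maxima of Prop.~\ref{prop:const} can be attained \emph{simultaneously} across all sub-loops without the concatenated unimodal rows interfering with one another at Step~4 of Construction~\ref{con:child}. The first assertion follows from applying the induction hypothesis to the parent of order $d-1$; the second requires Thm.~\ref{thm:gcc} to certify that every candidate $(b,d)$-unimodal loop is captured by the construction, and a careful check that the choice of first/last flip positions in Prop.~\ref{prop:const} can be made independently inside each sub-loop. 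Once these two observations are in place, the induction closes and both equalities are established.
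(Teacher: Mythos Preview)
Your proposal is correct and follows essentially the same route as the paper: both argue by induction on $b$, count the sub-loops at Step~1 of Construction~\ref{con:child} as the number of distinct length-$(b-d)$ prefixes (i.e., $M^*(b-d,d)$, or $1$ when $b\le d$), apply Prop.~\ref{prop:const} to bound the per-sub-loop increment, and solve the resulting recursions from the base case $\mathcal{L}(1,d)=\odot[1,0]$. If anything, you are more explicit than the paper about the converse direction and the simultaneous attainability across sub-loops, which the paper leaves largely implicit.
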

\begin{proof}
The proof is provided in Appendix~\ref{app:size}.
\end{proof}

We conclude this section by proving the optimality of max cardinality loops.

\begin{Theorem}[\textbf{Optimal Unimodal Loop}]
\label{thm:opmax}
For a given $b$ and $d$, to find an optimal solution $\Sset\str(b,d)$ to the problem in \eqref{eq:optimization}, it is sufficient to use a {max cardinality} $(b,d)$-unimodal loop.
\end{Theorem}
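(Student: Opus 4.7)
The plan is to show that every scanning beam set $\Sset$ with decomposition $(\Iset,\Lset)$ is dominated by some $\Sset'=(\Iset',\MCLset)$ whose loop of feedback sequences is max cardinality. Since the objective in \eqref{eq:optimization2} depends on $\Sset$ only through its set of URs, it suffices to produce $\Sset'$ whose URs refine those of $\Sset$ (with some URs possibly of zero width).

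The first ingredient is a refinement inequality: splitting a UR $U$ of mass $p=\int_U f_{\Psi}(\psi)\,d\psi$ into disjoint sub-intervals $U_1,\dots,U_k$ with masses $p_j$ never decreases $\sum_j p_j/|U_j|$. The two-term case follows from the identity
\[
\frac{p_1}{|U_1|}+\frac{p_2}{|U_2|}-\frac{p_1+p_2}{|U_1|+|U_2|}
=\frac{p_1|U_2|^2+p_2|U_1|^2}{|U_1||U_2|\bigl(|U_1|+|U_2|\bigr)}\geq 0,
\]
and the general case by iteration. Hence refining the URs can only increase the objective in \eqref{eq:optimization2}; in particular, assigning zero width to a subset of component beams preserves the objective contributed by the remaining ones exactly.

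The second ingredient is a structural embedding: for every $(b,d)$-unimodal loop $\Lset(b,d)$, there exist a max cardinality loop $\MCLset(b,d)$ and a cyclic-order-preserving injection from the columns of $\Lset(b,d)$ into those of $\MCLset(b,d)$. I would prove this by induction on $b$. The base $b=1$ is immediate since $\MCLset(1,d)=\odot\{1,0\}$ is unique and every $(1,d)$-unimodal loop equals it. For the inductive step, use Def.~\ref{def:parent} to write $\Lset(b,d)$ as a child of its parent $\Lset(b-1,d)$ via Construction~\ref{con:child}; by the inductive hypothesis, embed $\Lset(b-1,d)$ into some $\MCLset(b-1,d)$. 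Then build $\MCLset(b,d)$ by applying Construction~\ref{con:child} to $\MCLset(b-1,d)$, choosing, for each sub-loop formed at Step~1, a concatenated unimodal row that (i) satisfies the max-cardinality constraints of Prop.~\ref{prop:const} used by Construction~\ref{con:maxC}, and (ii) on the image of the embedded parent sub-loop agrees with the concatenated row used to build $\Lset(b,d)$. Since the tripling at Step~2 of Construction~\ref{con:child} leaves room around each original column, the flip positions $i_f,i_l$ chosen by $\Lset(b,d)$ can always be realigned to satisfy $\mathrm{mod}(i_f,3)=\mathrm{mod}(i_l,3)=2$ (or $i_f=i_l$ when $d=1$) inside the refined sub-loop of $\MCLset(b-1,d)$. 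The additional flips create columns of $\MCLset(b,d)$ that lie outside the image of the extended injection.

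Combining the two ingredients, I would construct $\Iset'$ by transferring the width of each UR of $\Sset$ to its image column in $\MCLset$ under the injection and assigning width zero to all other columns. The resulting URs of $\Sset'$ refine those of $\Sset$, so by the refinement inequality the objective of $\Sset'$ is no smaller than that of $\Sset$, proving that an optimal solution of \eqref{eq:optimization} can be taken with a max cardinality loop. The main obstacle is the compatibility condition in the inductive step: one must track how the sub-loop partitioning at Step~1 of Construction~\ref{con:child} interacts with the injection, and verify that the mod-$3$ alignment of flip positions in the enlarged sub-loops can always be realized simultaneously with matching the positions inherited from $\Lset(b,d)$. This bookkeeping is where the assumption that $\Lset(b,d)$ itself arises from Construction~\ref{con:child} (Thm.~\ref{thm:gcc}) is used most heavily, so that the required realignment is never blocked by structural constraints of the parent loop.
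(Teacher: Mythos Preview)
Your refinement inequality is correct and stated more carefully than in the paper, which simply asserts the comparison. However, your route to the structural part---an inductive, content-preserving, cyclic-order-preserving embedding of $\Lset(b,d)$ into some $\MCLset(b,d)$---is considerably more intricate than what is actually needed, and the ``compatibility'' obstacle you flag is real: at the inductive step you must simultaneously (a) match the last row of $\Lset(b,d)$ on the image columns, (b) place the flip positions on columns with distinct length-$(b-d+1)$ prefixes as required by Construction~\ref{con:maxC}, and (c) hit the correct mod-$3$ residues. Because a cyclic-order-preserving injection need not land the image on consecutive positions of the sub-loop, the window in which $i_f$ and $i_l$ may be chosen is constrained by the image, and it is not automatic that a column with the required prefix lies in that window. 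Your sketch does not close this, and the tripling argument you give addresses only (c), not (b).

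The paper sidesteps all of this for $d\ge 2$ by exploiting the single fact $N^{*}(b,d)=M^{*}(b,d)$ from Theorem~\ref{thm:size}: an $\MCLset(b,d)$ has \emph{no} repeated columns, so every component beam you place in it becomes its own UR, regardless of content. Thus one may take any fixed $\MCLset$, drop the intervals of $\Iset$ into $|\Iset|$ consecutive positions (with zero-width intervals elsewhere), and obtain URs that are exactly the component beams of $\Iset$; these automatically refine the original URs, and your inequality finishes. No induction, no content-preserving embedding, and no compatibility bookkeeping is required. For $d=1$ the paper does invoke an embedding claim similar in spirit to yours (extend $\Lset(b,1)$ to an $\MCLset(b,1)$ by inserting the missing columns), relying on the observation that no column in a $(b,1)$-unimodal loop is repeated more than once; this is closer to your approach, but is argued directly rather than via the child-loop induction. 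Finally, a small slip: in your last paragraph you speak of transferring ``the width of each UR of $\Sset$ to its image column,'' but your injection is defined on columns (component beams), not on URs; the argument needs the component-beam widths transferred, after which the refinement inequality applies.
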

\begin{proof}
The proof is provided in Appendix~\ref{app:opmax}.
\end{proof}

Theorem \ref{thm:opmax} indicates that an $\MCLset$, can be used as an optimality achieving loop of feedback sequences $\Lset\str$.

From Thm.~\ref{thm:size}, we observe that $M\str(b,d) = N\str(b,d)$ for $d\geq 2$. As a result, for $d\geq 2$ no $\MCLset$ includes any repetitions. In our BA problem, this means that the max cardinality loop resulting from Const.~\ref{con:maxC}, leads to contiguous URs. This is important in practice since contiguous beams are easier to implement compared to non-contiguous beams. We will provide a method for deriving these contiguous beams in the next section along with a lower-bound on the optimal performance.

\section{Bound on Maximum Expected Beamforming Gain and Achievability}
\label{sec:angular_Int}

In this section, we first provide an optimality achieving beam alignment procedure. Then, we provide a lower bound on the maximum expected beamforming gain.

\subsection{Achievability for Maximum Expected Beamforming Gain}

In Thm.~\ref{thm:opmax}, we have shown that to find an optimal scanning beams set $\Sset\str$, it is enough to consider a $(b,d)$-unimodal loop, $\Lset\str$, constructed using Const.~\ref{con:maxC}. Here, we find an optimal loop of component beams $\Iset\str$ which after combination with $\Lset\str$ through Const.~\ref{cons:Prob_beam_set} results to an optimal scanning beam set $\Sset\str$. 

To find the optimal loop of component beams, $\Iset\str = \{I_j\str\}_{j =1}^{N\str(b,d)}, I_j\str = (x_j\str , x_{j+1}\str]$, let us consider the set of URs $\mathcal{U}\str = \{U_k\str\}_{k=1}^{M\str(b,d)}$. The angular interval of $U_k\str$ is equal to the union of the component beams in $\Iset\str$ whose corresponding feedback is the $k^{\rm th}$ unique column of $\Lset\str$. We can find the optimal $x_j\str, j\in [N\str(b,d)]$ by rewriting the optimization in \eqref{eq:optimization} as follows. 
\begin{align}
\label{eq:OpI}
\begin{aligned}
&\argmin_{x_j, j\in [N\str(b,d)]} \quad \!\!\!\sum_{i=1}^{M\str(b,d)} \frac{1}{\sum_{j:\{I_j \in U_i\}} (x_{j+1} - x_j)} \!\!\int_{u_i}\!\!\! f_{\Psi}(x)dx. \\
&\text{such that:}~ \sum_{j:\{I_j \in U_i\}} (x_{j+1} - x_j) >0
\end{aligned}
\end{align}
For $d\geq 2$, since we know that each $U_k\str$ is contiguous and therefore consists of one component beam, the above optimization reduces to
\begin{align}
\label{eq:OpCon}
\begin{aligned}
&\argmin_{x_j, j\in [M\str(b,d)]} \quad  \sum_{j=1}^{M\str(b,d)} \frac{1}{x_{j+1} - x_j}\int_{x_j}^{x_{j+1}} f_{\Psi}(x)dx. \\
&\text{such that:}\quad\quad x_{j+1} - x_j >0
\end{aligned}
\end{align}
The construction of an optimal scanning beam set is summarized in the following result
\begin{Theorem}[\textbf{Optimal Scanning Beams}]
\label{thm:optBeamDesign}
An optimal set of scanning beams in \eqref{eq:optimization} can be generated using Const.~\ref{cons:Prob_beam_set} with a $(b,d)$-unimodal loop from Const.~\ref{con:maxC} and a loop of component beams derived form solving \eqref{eq:OpI} for $d=1$ or \eqref{eq:OpCon} for $d\geq 2$.
\end{Theorem}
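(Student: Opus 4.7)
The plan is to assemble this result from the decomposition/construction equivalence established earlier together with the optimality of max cardinality unimodal loops. First, I would invoke Theorems~\ref{thm:Lprop} and~\ref{thm:1t1} to reduce the original optimization \eqref{eq:optimization} over scanning beam sets $\Sset$ to a joint optimization over pairs $(\Iset,\Lset)$, where $\Iset$ is a loop of component beams and $\Lset$ is a $(b,d)$-unimodal loop of the same cardinality: any $\Sset$ decomposes into such a pair by Thm.~\ref{thm:Lprop}, and conversely Const.~\ref{cons:Prob_beam_set} in Thm.~\ref{thm:1t1} produces a scanning beam set $\Sset$ realizing any valid pair. Hence it suffices to exhibit an optimal pair $(\Iset\str,\Lset\str)$ and feed it through Const.~\ref{cons:Prob_beam_set}.

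Next I would fix the feedback-side loop by invoking Thm.~\ref{thm:opmax}, which shows that restricting $\Lset$ to a max cardinality $(b,d)$-unimodal loop $\MCLset$ incurs no loss of optimality. Since Const.~\ref{con:maxC} in Thm.~\ref{thm:max_size} explicitly generates an $\MCLset(b,d)$ for every $(b,d)$, we may set $\Lset\str$ to be any such loop produced by Const.~\ref{con:maxC}. With $\Lset\str$ fixed, the number of URs $M\str(b,d)$ and the assignment of component beams to URs (namely, which columns of $\Iset$ aggregate to form each $U_k\str$) are determined.

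The remaining task is to choose the loop of component beams $\Iset\str = \{(x_j\str,x_{j+1}\str]\}_{j=1}^{N\str(b,d)}$ that maximizes the expected beamforming gain in \eqref{eq:optimization2}. Writing $|U_k| = \sum_{j:I_j\in U_k}(x_{j+1}-x_j)$ and $\int_{U_k} f_\Psi = \sum_{j:I_j\in U_k}\int_{x_j}^{x_{j+1}} f_\Psi(x)\,dx$, maximizing \eqref{eq:optimization2} is (after dropping the constant $2\pi$) equivalent to minimizing the expression in \eqref{eq:OpI}, subject to the non-degeneracy constraint that each UR have strictly positive width. This establishes the $d=1$ case. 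For $d\geq 2$, I would appeal to the identity $M\str(b,d)=N\str(b,d)$ from Thm.~\ref{thm:size}: since the max cardinality loop from Const.~\ref{con:maxC} contains no column repetitions, every UR $U_k\str$ is the union of exactly one component beam of $\Iset\str$. The sum defining $|U_k|$ therefore collapses to a single term $x_{j+1}-x_j$, so \eqref{eq:OpI} simplifies to \eqref{eq:OpCon}.

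The main obstacle I expect is the bookkeeping for the $d\geq 2$ case, in particular tying together the structural claim (each UR is a single component beam) with the construction: this requires carefully tracing how Const.~\ref{con:maxC} avoids column repetitions via the $M\str=N\str$ equality of Thm.~\ref{thm:size}, so that the partition structure of the URs matches the loop of component beams one-to-one. Once this is in place, feeding the optimal pair $(\Iset\str,\Lset\str)$ into Const.~\ref{cons:Prob_beam_set} yields, by Thm.~\ref{thm:1t1}, a scanning beam set $\Sset\str$ that attains the maximum in \eqref{eq:optimization}, completing the proof.
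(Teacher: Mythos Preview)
Your proposal is correct and follows essentially the same approach as the paper: the paper does not give a separate formal proof of this theorem but rather states it as a summary of the preceding discussion in Section~\ref{sec:angular_Int}, which chains Thm.~\ref{thm:opmax} (sufficiency of $\MCLset$ from Const.~\ref{con:maxC}) with the reduction to the component-beam optimization \eqref{eq:OpI}, and then uses $M\str(b,d)=N\str(b,d)$ from Thm.~\ref{thm:size} to collapse \eqref{eq:OpI} to \eqref{eq:OpCon} when $d\geq 2$. Your write-up is simply a more explicit version of that same chain, including the appeal to Thms.~\ref{thm:Lprop} and~\ref{thm:1t1} for the $(\Iset,\Lset)\leftrightarrow\Sset$ equivalence, which the paper leaves implicit at this point.
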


\subsection{Maximum Expected Beamforming Gain}
Next theorem bounds the maximum expected beamforming gain for the BA problem.
\begin{Theorem}[\textbf{Maximum Expected Beamforming Gain}]
\label{thm:bounds}
The optimal value of the objective function in optimization problem \eqref{eq:optimization} when contiguous scanning beams are used is bounded as 
\begin{align}
\label{eq:con}
M\str(b,d) \leq \max_{\Sset} \mathbb{E}_{\Psi}\left[\frac{2\pi}{|\Beam(\Sset,\Psi)|}\right]
\end{align}
\end{Theorem}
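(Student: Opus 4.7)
Since \eqref{eq:con} is a lower bound on the maximum, the plan is purely constructive: exhibit one scanning beam set $\Sset$ whose expected beamforming gain equals $M\str(b,d)$. First, I would invoke Thm.~\ref{thm:max_size} (via Const.~\ref{con:maxC}) to produce a max cardinality $(b,d)$-unimodal loop $\Lset\str$ with $N\str(b,d)$ columns and $M\str(b,d)$ unique columns; by Thm.~\ref{thm:opmax} this is the optimality-achieving feedback skeleton to consider.

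Next, I would pair $\Lset\str$ with an explicit loop of component beams $\Iset = \{(x_j,x_{j+1}]\}_{j\in [N\str(b,d)]}$ and apply Const.~\ref{cons:Prob_beam_set} (Thm.~\ref{thm:1t1}) to obtain a scanning beam set $\Sset$ realizing $(\Iset,\Lset\str)$. By design, each UR $U_m$ is the union of the component beams whose columns in $\Lset\str$ match the $m$-th unique column. The key design choice is to pick $\{x_j\}$ so that $|U_m| = 2\pi/M\str(b,d)$ for every $m$: if $U_m$ consists of $n_m$ component beams, one assigns each such component beam width $2\pi/(M\str(b,d)\,n_m)$, and the total across all component beams is $2\pi$ as required. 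For $d \geq 2$, Thm.~\ref{thm:size} gives $N\str = M\str$, so each UR is a single component beam and the design reduces to an equal-width partition of $(0,2\pi]$.

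Substituting this specific $\Sset$ into \eqref{eq:optimization2} completes the argument: each factor $2\pi/|U_m|$ equals $M\str(b,d)$ and pulls out of the sum, leaving $M\str(b,d) \sum_m \int_{U_m} f_{\Psi}(\psi)\,d\psi = M\str(b,d)$ since $\{U_m\}$ partitions $(0,2\pi]$ and $f_\Psi$ integrates to one. Because this value is attained by a valid $\Sset$, the maximum in \eqref{eq:con} is at least $M\str(b,d)$, which is precisely the claimed inequality.

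The step I expect to be the main obstacle is the $d=1$ sub-case of the width-assignment argument: URs are necessarily non-contiguous there (since $N\str(b,1) > M\str(b,1)$), so one must verify that a loop of component beams realizing equal total widths actually exists and is consistent with Const.~\ref{cons:Prob_beam_set}. The resolution is the observation that component-beam widths can be distributed arbitrarily \emph{within} each feedback-equivalence class, subject only to the global sum being $2\pi$; since Const.~\ref{cons:Prob_beam_set} constrains only which component beams fall inside each scanning beam and not their individual widths, no further obstruction arises.
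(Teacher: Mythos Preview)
Your proposal is correct and uses the same constructive idea as the paper: build a max cardinality loop via Const.~\ref{con:maxC}, pair it with component beams chosen so that every UR has width $2\pi/M\str(b,d)$, and read off the bound. The only difference is cosmetic: the paper first routes through the inequality $\max_{\Sset}\mathbb{E}_\Psi[2\pi/|\Beam(\Sset,\Psi)|] \geq 2\pi/\min_{\Sset}\mathbb{E}_\Psi[|\Beam(\Sset,\Psi)|]$ and then upper-bounds the minimum expected beamwidth by $2\pi/M\str(b,d)$, whereas you substitute directly into \eqref{eq:optimization2}; for equal-width URs the two computations coincide, so nothing is gained or lost either way. Your handling of the $d=1$ case --- distributing width across the multiple component beams in each feedback-equivalence class so that $|U_m|$ is still $2\pi/M\str(b,1)$ --- is in fact more explicit than the paper's, which simply says it partitions $(0,2\pi]$ into $M\str(b,d)$ equal parts without addressing that the loop has $N\str(b,1)>M\str(b,1)$ columns in that regime.
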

\begin{proof}
To prove this results, we first provide an upper bound for minimum expected beamwidth $ \min_{\Sset} \mathbb{E}_{\Psi}|\Beam(\Sset,\Psi)| $ and then use the inequality 
\begin{align}
\label{eq:maxExpb}
    \max_{\Sset} \mathbb{E}_{\Psi}\left[\frac{2\pi}{|\Beam(\Sset,\Psi)|}\right] \geq \frac{2\pi}{\min_{\Sset} \mathbb{E}_{\Psi}\left[|\Beam(\Sset,\Psi)|\right]}
\end{align}
To upper bound $\min_{\Sset} \mathbb{E}_{\Psi}|\Beam(\Sset,\Psi)|$, we provide an achievability scheme as follows. We use Const.~\ref{con:maxC} to form a max cardinality loop $\MCLset$. Next, we form a loop of component beams $\Iset$ by partitioning $(0, 2\pi]$ into $M\str(b,d)$ equal length parts. Then, we use Const.~\ref{cons:Prob_beam_set} to create a set of scanning beams $\Sset$ based on the pair $(\MCLset, \Iset)$. It is easy to check that the length of each resulted UR from $\Sset$ is $\frac{2\pi}{M\str(b,d)}$. As a result,
\begin{align}
\label{eq:minExpw}
    \min_{\Sset} \mathbb{E}_{\Psi}\left[|\Beam(\Sset,\Psi)|\right] \leq \frac{2\pi}{M\str(b,d)}.
\end{align}
substituting \eqref{eq:minExpw} in \eqref{eq:maxExpb} completes the proof.
\end{proof}

By solving the optimizations in \eqref{eq:OpI} and \eqref{eq:OpCon} for the case of the uniform $f_{\Psi}(\cdot)$, we observe that for the case of uniform prior, the lower bound in \eqref{eq:con} is tight.

We conclude this section by noting that the derived results and analysis are not just limited to the expected beamforming gain objective. In fact, one can easily extend the proposed scanning beam set construction method to other objectives such as minimizing expected beamwidth which corresponds to solving the following optimization. 
\begin{align}
\begin{aligned}
\Sset^* = \argmin_{\Sset} \mathbb{E}_{\Psi}\left[|\Beam(\Sset,\Psi)|\right].
\end{aligned}
\end{align}
The only difference from our proposed construction would be that in \eqref{eq:OpI} and \eqref{eq:OpCon}, the terms $\dfrac{1}{\sum_{j:\{I_j \in U_i\}} (x_{j+1} - x_j)}$ and $\dfrac{1}{x_{j+1} - x_j}$ should be replaced by $ \sum_{j:\{I_j \in U_i\}} (x_{j+1} - x_j)$ and $ x_{j+1} - x_j$, respectively. We have also derived a lower bound for the minimum expected beamwidth in \cite{fullVersion}.

\begin{figure}[t]
    \centering
    \includegraphics[width = 0.35\linewidth]{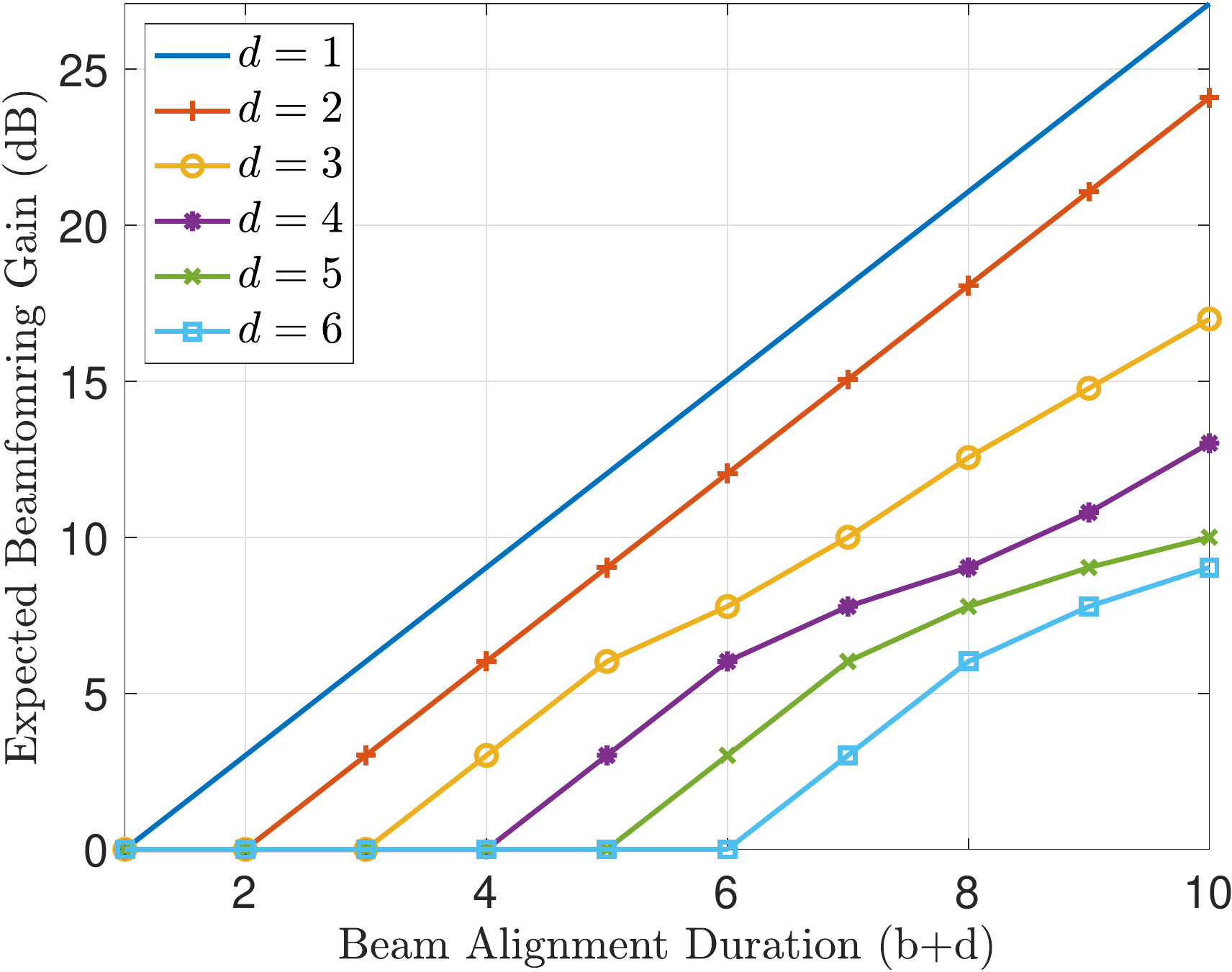}
    \caption{Maximum expected beamforming gain versus total duration of the beam alignment, illustrated for different delays. A uniform prior on $(0, 2\pi]$ is assumed. }
    \label{fig:beam_du1}
        \vspace*{-0.7cm}
\end{figure}

\section{Simulations and Numerical Analysis}
\label{sec:numerical}
We first investigate the trade-off between the achievable maximum expected beamforming gain, number of beam alignment packets $b$, and delay $d$. Let us consider uniform prior for the AoD $\Psi \sim \mathrm{Uniform} (0,2\pi]$. The maximum expected beamforming gain for different values of delay $d$ and total BA duration $b+d$ is illustrated in Fig.~\ref{fig:beam_du1}. As we can see, if the total duration of beam alignment is less than the delay, we wont receive any feedback and so the expected beamforming gain is one. We also observe that the cases of $d= 1$ and $d=2$ are parallel lines. The reason is that the case of $d=2$ leads to the same number of URs as of the case $d=1$. However, it has additional delay of one time-slot.  

Next, we compare the performance of the proposed BA method for the case of $\Psi \sim \mathrm{uniform}((0,$ $2\pi])$ with some of the state-of-the-art BA methods. We use a modified ES algorithm as described in the following to make sure our comparison is fair. More precisely, given $b$ probing packets, the ES method first divides $(0, 2\pi]$ into $b+1$ equal length contiguous URs. Then, transmits the $b$ beam alignment packets through the first $b$ URs. This way, it can distinguish all $b+1$ URs from one another based on the user's feedback and so achieves the expected beamwidth of $\frac{2\pi}{b+1}$. In the original ES method, however, $(0,2\pi]$ is divided into $b$ equal width regions and a BA packet is transmitted in all URs. Figure~\ref{fig:comp_naive}a shows the total duration of the BA for different target beamforming gains and different BA methods when $d=3$ and Fig.~\ref{fig:comp_naive}b shows the total BA duration that different BA methods require for different values of delay when the target beamforming gain is fixed. From Fig.~\ref{fig:comp_naive}a, we observe that the bisection method, which is optimal for $d=\{0,1\}$ is no longer optimal when we have feedback delay. Moreover, in some regimes, the non-interactive method derived in \cite{khalili2020optimal} which is optimal for $d \geq b$ and the modified ES method outperform the bisection method. This suggests that having small delay in the system can drastically lower the performance of interactive BA methods designed under the assumption of instant feedback. From Fig.~\ref{fig:comp_naive}b, we also have a similar observation that as delay increases, the performance of bisection rapidly degrades and after $d = 8$, it has the worst performance. Furthermore, this figure shows that for large values of delay, the optimal BA is the same as the optimal non-interactive BA. The reason is that the optimal non-interactive method in \cite{khalili2020optimal} is a special case of our problem for $d>b$.

\begin{figure}[t!]
    \centering
    \subfloat[]{\centering\includegraphics[width = 0.35\linewidth]{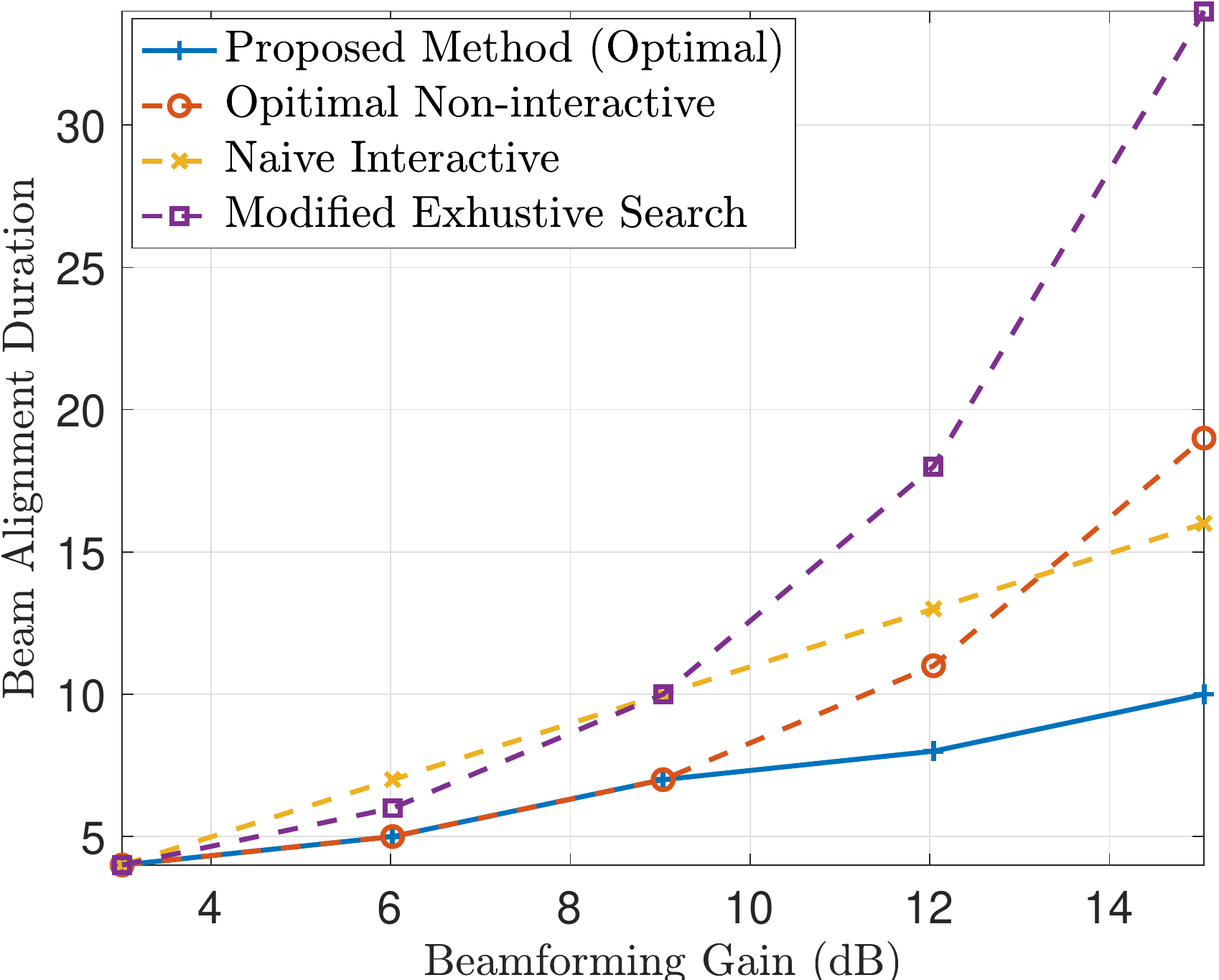}}
    \hspace{1.5cm}
   \subfloat[]{\centering\includegraphics[width = 0.35\linewidth]{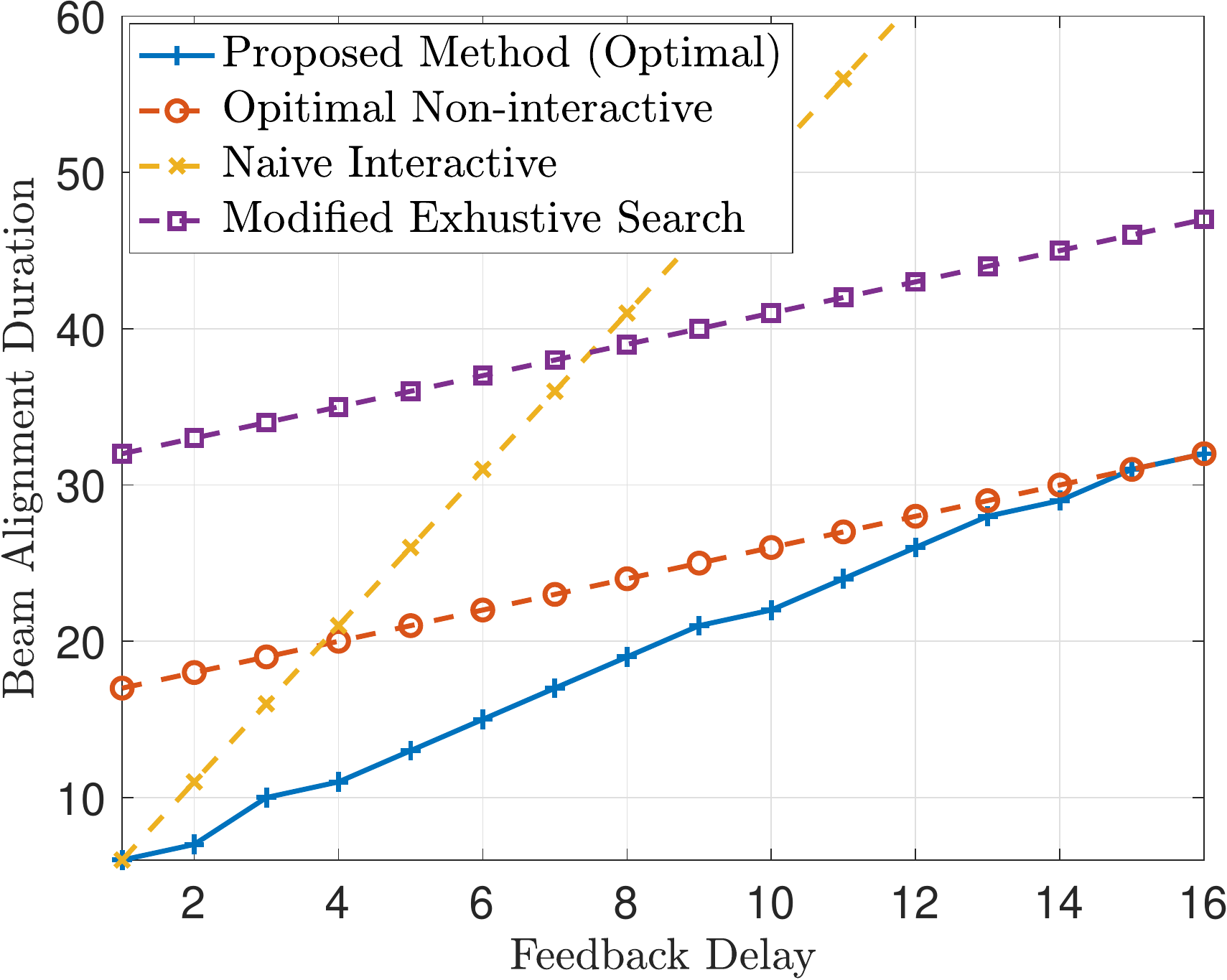}}
\caption{ (a) Total BA duration versus data beamwidth resolution for $d=3$. (b) Total BA duration versus feedback delays. The target beamforming gain is set to $32$ (beamwidth of $360/2^5 \approx 10^{\circ}$).}
\label{fig:comp_naive}
    \vspace*{-0.7cm}
\end{figure}

\begin{figure}[b!]
    \centering
    {\centering\includegraphics[width = 0.35\linewidth]{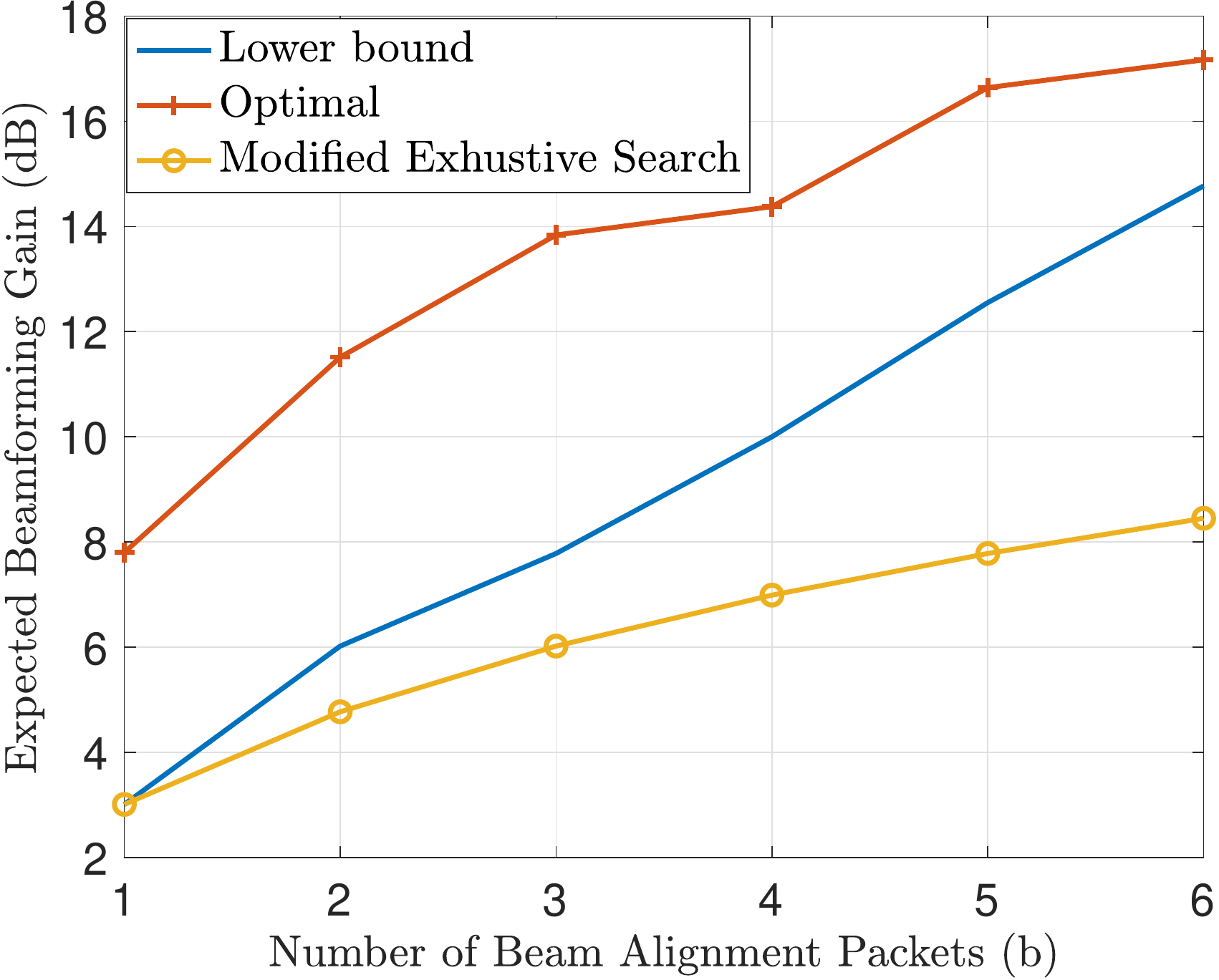}}
\caption{Performance of the optimal scanning beams Thm.~\ref{thm:optBeamDesign}, upper and lower bound in Thm.~\ref{thm:bounds}, and modified ES method for $d=3$.}
\label{fig:beam_comp}
\end{figure}

We conclude this section by investigating the performance of the optimal BA from Thm.~\ref{thm:optBeamDesign} with the lower bounds derived in Thm.~\ref{thm:bounds} and the performance of the modified ES method. For the comparison, we assume $d=3$ and consider a semi-Gaussian distribution for the AoD, where $\Psi \sim \frac{\mathcal{N}(\pi, 1)}{Q(-\pi)-Q(\pi)}$ for $\psi \in (0, 2\pi]$ with $Q(x) = \int_x^{\infty} \frac{1}{\sqrt{2\pi}}e^{-\frac{x^2}{2}}dx$. The comparison is plotted in Fig.~\ref{fig:beam_comp} where as expected we observe that the optimal performance is higher than the lower bound derived in Thm.~\ref{thm:bounds} and the performance of the modified ES method.
Comparing the formulas of the lower bound and modified ES beamforming gains, we observe that the optimal BA scheme is able to reduce the expected beamwidth at least $\frac{M\str(b,d)}{b} \geq 2$ times compared to the ES method.

\section{Conclusion}
In this paper, we have investigated the single-user analog beam alignment problem, where we have a fixed delay between each transmitted beam alignment packet and its received feedback given a fixed prior distribution on the AoD of the user. We have developed a general framework for this problem and provided a lower bound on the maximum expected beamforming gain. Moreover, we have developed explicit BA procedure that achieves the lower bound for the case of the uniform prior distribution. Furthermore, we have performed detailed simulations and numerical evaluations of the derived optimal BA strategy and compared its performance with the state-of-the-art methods. We have observed that the proposed BA method provides significant improvements in terms of BA duration required to achieve a fixed expected beamforming gain.

\bibliography{main.bbl}

\appendices

\section{Proof of Theorem \ref{thm:Lprop}}
\label{app:Lprop}

\textit{1)} Suppose that at time slot $i$, the BS receives a feedback sequence $(a_1,a_2,\ldots, a_{i-d})$ that translates to using the scanning beam $\Phi_i = S_{i,m}$. Consider the columns of $\Lset$ that have the prefix $a_1,a_2,\ldots, a_{i-d}$. The $i^{\rm th}$ bit of these columns would be one iff their corresponding component beams in $\Iset$ is included in the beam $S_{i,m}$. This is because, an ACK to the beam  $S_{i,m}$ means that the feedback $(a_1,a_2,\ldots, a_{i-d})$ was received by time-slot $i$ and the user AoD was included in the beam $S_{i,m}$. Therefore, row $i$ of the columns with prefix $a_1,a_2,\ldots, a_{i-d}$ is a sub-loop of the binary loop representing the beam $S_{i,m}$ based on the loop of the component beams. The binary loop representing the beam $S_{i,m}$ is unimodal and sub-loop of a unimodal loop is also unimodal.

\textit{2)} We show this by contradiction. Assume $\Lset$ has a consecutive repetition. This means two consecutive component beams lets say $I_1$ and $I_2$ have same feedback sequences. If so, all the scanning beams in $\Sset$ should either include both $I_1$ and $I_2$ or none.  Therefore, if we form the component beams of $\Sset$, since $I_1$ and $I_2$ are adjacent, we should get $I_1 \cup I_2$ as a component beam instead of two separate component beams $I_1$ and $I_2$ which is a contradiction.

\section{Proof of Theorem \ref{thm:1t1}}
\label{app:1t1}
We first show that the loop of component beams of the scanning beam set resulting from Const.~\ref{cons:Prob_beam_set}, $\Sset$, is the loop $\Iset$, and then prove that replacing the elements of the loop $\Iset$ with their corresponding feedback sequences from $\Sset$, results in the loop $\Lset$.

\textit{Loop of Component Beams:} To show that the loop of component beams of $\Sset$ is the loop $\Iset$, it suffices to show \textit{i)} no endpoints of the beams in $\Sset$ fall inside (not the edge) of the angular intervals in the loop $\Iset$  and \textit{ii)} for every two consecutive angular intervals in the loop $\Iset$, lets say $I_1$ and $I_2$, there is a beam in $\Sset$ that includes one but not the other. Item \textit{i)} holds by construction since for each angular interval in $\Iset$ a constructed beam in Step~3 of the Const.~\ref{cons:Prob_beam_set} either includes the angular interval completely or does not include it at all. For item \textit{ii)}, note that since $I_1$ and $I_2$ are adjacent, their corresponding columns in the loop $\Lset$ are different, otherwise the loop $\Lset$ would have consecutive repetitions which contradicts definition of $(b,d)$-unimodal loop (Thm.~\ref{thm:Lprop}). Let's denote these columns by $C_1$ and $C_2$. Assume that the first bit in which $C_1$ and $C_2$ differ is the $i^{\rm th}$ bit. Now,  consider Step~1 of Const.~\ref{cons:Prob_beam_set}. Since the prefixes of length $i-d$ of $C_1$ and $C_2$ are the same, these columns will be in the same sub-loop. However, since they differ in the $i^{\rm th}$ bit (row), the beam associated with their prefix of length $i-d$ only contains $I_1$ or $I_2$. 

\textit{Loop of Feedback Sequences:} 
We show that the feedback sequence based on the scanning beam set $\Sset$ to each of the component beams in the loop $\Iset$ is equal to the its corresponding column in the loop $\Lset$. Without loss of generality, consider a component beam $I_1$ from the loop $\Iset$ and let us denote its corresponding column in the loop $\Lset$ by $C_1$. Also, consider the $i^{\rm th}, i\in[b]$ bit of $C_1$ and suppose it is one (the case of it being zero can be argued similarly). While constructing $\Sset_i$ in Const.~\ref{cons:Prob_beam_set}, since bit $i$ of $C_1$ is one, the scanning beam in $\Sset_i$ corresponding to $C_1$ (the scanning beam which is designed for the feedback sequence equal to the first $i-d$ bits of $C_1$) will contain the component beam $I_1$. Therefore, if the user AoD falls in the angular interval $I_1$, the resulting feedback to this scanning beam should be one. Hence, bit $i$ of the feedback sequence corresponding to the component beam $I_1$ based on the scanning beam set $\Sset$ and bit $i$ of $C_1$ are equal for all $i\in[b]$.

\section{Proof of Theorem~\ref{thm:gcc}}
\label{app:gcc}
To prove this theorem, we assume that we are given a child loop $\Lset(b,d)$ for the parent loop $\Lset(b-1,d)$ and we will use Const.~\ref{con:child} to generate the loop $\Lset(b,d)$ from $\Lset(b-1,d)$. To this end, we provide the concatenated unimodal loops at Step~3 of Const.~\ref{con:child} for each sub-loop of the loop $\Lset(b-1,d)$ created at Step~1 of the construction.

Note that, for each sub-loop of the parent loop with certain prefix of length $b-d$, there is sub-loop of the child loop with the same prefix. Without loss of generality, let's consider the sub-loop ${\Lset}_1(b-1,d)$ from the parent loop and its corresponding sub-loop from the child loop ${\Lset}_1(b,d)$ whose columns have the same prefix of length $b-d$ as of the columns in the sub-loop ${\Lset}_1(b-1,d)$. Let us repeat each column in ${\Lset}_1(b-1,d)$ three times and form the loop $\hat{\Lset}_1(b-1,d)$. Also, let us repeat the columns of ${\Lset}_1(b,d)$ to form the loop $\hat{\Lset}_1(b,d)$ in which all columns have exactly three consecutive columns with the same prefix of length $b-1$ bits. 
This is possible since there cannot be more than three consecutive columns in ${\Lset}_1(b,d)$ that have the same prefix of length $b-1$. The reason is that the last element of these columns which is the $b^{\rm th}$ bit can either be zero or one and so having more than three columns with the same prefix of length $b-1$ either means that a column has consecutive repetition or the loop of the last row of $\Lset_1(b,d)$ is not unimodal. Both of these cases contradict the definition of $(b,d)$-unimodal loops (Thm.~\ref{thm:Lprop}).

All prefixes of length $b-1$ in ${\Lset}_1(b,d)$ also exist in ${\Lset}_1(b-1,d)$ due to the definition of the parent loop. Therefore, 
the loops $\hat{\Lset}(b,d)$ and $\hat{\Lset}(b-1,d)$ have the same cardinality and the columns at the same positions in these loops have the same prefix of length $b-1$. Let us denote the binary loop consisting of the last bits of the columns in the loop $\hat{\Lset}(b,d)$ as $\Pset$. This binary loop is unimodal since repetition of the consecutive bits in a unimodal loop leads to another unimodal loop and the last row of $\Lset_1(b,d)$ is unimodal by definition. It is easy to check that if we generate a binary loop $\Pset$ as above for each sub-loop of $\Lset(b-1,d)$ and use it for that sub-loop at Step~3 of Const.~\ref{con:child}, we will get the child loop ${\Lset}(b,d)$ at the end of the construction.

\section{Proof of Theorem \ref{thm:size}}
\label{app:size}

Note that the number of sub-loops of the loop $\Lset(b-1,d)$ created at Step~1 of Const.~\ref{con:maxC} is by definition of a parent loop equal to $M(b-d,d)$ (number of unique column in its parent loop of order $d-1$ and feedback sequences received by time-slit $b-d$). When $b\leq d$, we only have one sub-loop and so we have $M\str(b-d,d) =  M(b-d,d) = 1$ when $b\leq d$. We consider the cases of $d = 1$ and $d\geq 2$ separately.

For $d = 1$, for each of the created sub-loops, as argued in Sec.~\ref{subsec:olfs}, the concatenated unimodal loop can at most and always, increase the number of columns and unique columns by two and one, respectively. Therefore, 
\begin{align}
 M\str(b,1) = M\str(b-1,1) + M\str(b-1,1),\\
 N\str(b,1) = N\str(b-1,1) + 2M\str(b-1,1).
\end{align}
Also, note that for $b=1$, the max cardinality loop is $\MCLset(1,d)= \odot[0,1]$. Solving these equations lead to \eqref{eq:maxd1}.

For $d\geq 2$, for each of the created sub-loops, the concatenated unimodal loop, can at most and always, increase the number of columns and unique columns by two each as argued in Sec.~\ref{subsec:olfs}. Therefore, 
\begin{align}
    M\str(b,d) = M\str(b-1,d) + 2M\str(b-d,d),\\
    N\str(b,d) = N\str(b-1,d) + 2M\str(b-d,d).
\end{align} 
Given $\MCLset(1,d)= \odot[0,1]$ we have $N\str(1,d) = M(1,d) = 2$. Therefore, 
\begin{align}
    N\str(b,d)  = M\str(b,d) = M\str(b-1,d) + 2M\str(b-d,d).
\end{align}
This along with $M\str(b-d,d) = 1$ when $b\leq d$ leads to \eqref{eq:maxd}.

\section{Proof of Theorem \ref{thm:opmax}}
\label{app:opmax}

Consider, we have a scanning beam set $\Sset$ corresponding with the pair $(\Iset,\Lset)$. We consider the cases of $d=1$ and $d\geq 2$, separately. 

\textbf{When $\mathbf{d = 1}$}: We show that $\Sset$ can be constructed using an $\MCLset$ and Const.~\ref{cons:Prob_beam_set}. First, note that $\Lset(b,1)$ cannot have more than one repetition of each unique column based on the child loop construction, Const.~\ref{con:child}. The reason is that as discussed in Sec.~\ref{subsec:olfs}, each created sub-loop at Step~1 of this construction, only includes a unique column and the added unimodal loop at Step~3 of the construction can only increase the number of columns and unique columns by two and one, respectively. Therefore, if the parent loop does not have more than one repetition of a column, its child loops also cannot have more than one repetition of a column. On the other hand, none of the possible parent loops of order $b-1$ which are $\odot[0]$, $\odot[1]$, and $\odot[0,1]$ have any repetitions. Thus, no $(b,1)$-unimodal loop includes a column that is repeated more than once. 

Based on Thm.~\ref{thm:size}, a max cardinality loop $\MCLset(b,1)$ includes all the $2^b$ possible binary columns of length $b$ and has $2^{b+1}-2$ columns. In conjunction with the above discussion, we can conclude that in the MCL, there are $2$ columns which don't have a repetition and the rest of the columns are each repeated once. It is now evident that one can create an $\MCLset(b,1)$ from the loop $\Lset(b,1)$ by adding the binary columns of length $b$ which are not included in $\Lset(b,1)$ and a set of repetitions. Let us create a loop of component beams $\tilde{\Iset}$, such that an interval of length zero is added to the loop $\Iset$ in the same position of each column that we need to add to $\Lset(b,1)$ to get the $\MCLset(b,1)$. The scanning beams set $\tilde{\Sset}$ resulted form the pair ($\MCLset(b,1),\tilde{\Iset}$) using Const.~\ref{cons:Prob_beam_set} has the same URs as of the scanning beam set $\Sset$ and so has the same performance. 

\textbf{When $\mathbf{d \geq 2}$}: We show that given any scanning beam set $\Sset$, one can create a scanning beam set $\tilde{\Sset}$ using an $\MCLset$ and Const.~\ref{cons:Prob_beam_set} which has equal or higher expected beamforming gain. We create a loop of component beams $\tilde{\Iset}$ with the same number of elements as of $\MCLset(b,d)$ by setting the $|\Iset|$ elements of $\tilde{\Iset}$ equal to the intervals in $\Iset$ and the rest of the intervals to intervals of length zero. This is possible since $|\Lset| \leq |{\MCLset}|$ by Def.~\ref{def:MCL}. Then, using Const.~\ref{cons:Prob_beam_set} and pair $(\MCLset, \tilde\Iset)$ we create $\tilde\Sset$. The set of URs $\tilde\Uset$ corresponding to the $\tilde{\Sset}$ will have same contiguous angular intervals as of the $\Uset$. However, all the intervals would have a distinct feedback sequence since $\MCLset$ does not have any repetitions as a consequence of Thm.~\ref{thm:size} (i.e., $N\str = M\str$). Therefore, the expected beamforming gain corresponding to $\tilde\Sset$ is greater than or equal to that of $\Sset$.
\end{document}